    \tikzset{every picture/.style={remember picture}}
\renewcommand{\subsubsection}[1]{\smallskip\textbf{#1.~}}
\newenvironment{talign}
 {\align}
 {\endalign}
\newenvironment{talign*}
 {\csname align*\endcsname}
 {\endalign}
\newtheorem{proposition}{\bf Proposition}
\newtheorem{theorem}[proposition]{\bf Theorem}
\newcommand{\BR}{\textit{{BR}}}
\newcommand{\xl}{\textit{{L}}}
\newcommand{\xf}{\textit{{F}}}
\newcommand{\otc}{o}
\newcommand{\vecone}[1]{#1}
\renewcommand{\cite}[1]{\citep{#1}}
\begin{document}

\title{Imitative Follower Deception in Stackelberg Games%
\thanks{Jiarui Gan was supported by the EPSRC International Doctoral Scholars Grant EP/N509711/1. Zinovi Rabinovich's contribution to this paper was supported by NTU SUG M4081985 grant.}}

\author[1]{Jiarui Gan}
\author[2]{Haifeng Xu}
\author[3]{Qingyu Guo}
\author[4]{Long Tran-Thanh}
\author[3]{\\ Zinovi Rabinovich}
\author[1]{Michael Wooldridge}

\affil[1]{University of Oxford}
\affil[2]{Harvard University}
\affil[3]{Nanyang Technological University}
\affil[4]{University of Southampton}

\renewcommand\Authands{ and }

\date{}

\maketitle

\begin{abstract}
Information uncertainty is one of the major challenges facing applications of game theory. In the context of Stackelberg games, various approaches have been proposed to deal with the leader's incomplete knowledge about the follower's payoffs, typically by gathering information from the leader's interaction with the follower. Unfortunately, these approaches rely crucially on the assumption that the follower will \emph{not} strategically exploit this information asymmetry, i.e., the follower behaves truthfully during the interaction according to their actual payoffs.
As we show in this paper, the follower may have strong incentives to deceitfully imitate the behavior of a different follower type and, in doing this, benefit significantly from inducing the leader into choosing a highly suboptimal strategy.
This raises a fundamental question: how to design a leader strategy in the presence of a deceitful follower?
To answer this question, we put forward a basic model of Stackelberg games with \emph{(imitative) follower deception} and show that the leader is indeed able to reduce the loss due to follower deception with carefully designed {policies}.
We then provide a systematic study of the problem of computing the optimal leader {policy} and draw a relatively complete picture of the complexity landscape;
essentially matching positive and negative complexity results are provided for natural variants of the model. Our intractability results are in sharp contrast to the situation with no deception, where the leader's optimal strategy can be computed in polynomial time, and thus illustrate the intrinsic difficulty of handling follower deception.
Through simulations we also examine the benefit of considering follower deception in randomly generated games.
\end{abstract}

\section{Introduction}
\label{sec:intro}

Recently, there is a growth of interest in Stackelberg games in the AI community.
This trend is driven in part by a number of high-impact real-world applications in security domains~\cite{tambe2011security}.
Beyond that, Stackelberg game models also find many noteworthy applications in other problems, such as principal-agent contract design \cite{contract4,contract3} and exam design for large-scale tests \cite{li2017game}.
In a Stackelberg game, a leader commits to a mixed strategy and a follower best-responds after observing the leader's strategy. The Stackelberg equilibrium yields the optimal strategy the leader can commit to in this framework.
As in many other game-theoretic applications, a key real-world challenge facing applications of Stackelberg games is that the leader may not have full information about the follower's payoffs for computing the equilibrium.
To address this issue, various approaches have been proposed:
When the leader can {estimate} the follower's payoffs to within certain intervals, leader strategies that are robust against small interval uncertainties are computed \cite{letchford2009learning,kiekintveld2013security,nguyen2014regret};
When the leader knows a distribution about the follower's payoffs/types, Bayesian Stackelberg
equilibria are studied to maximize the leader's average utility~\cite{conitzer2006computing,paruchuri2008playing,jain2008bayesian};
When the leader can interact with the follower, algorithms are designed to learn the optimal leader strategy to commit to from the follower's best responses in the interaction~\cite{letchford2009learning,blum2014learning,haghtalab2016three,roth2016watch,peng2019learning}.

Despite their differences,  the above approaches share a common and crucial step --- obtaining payoff-relevant information about the follower. However, knowing the leader's attempt at information gathering, a strategic follower would be incentivized to intentionally distort information learned by the leader, in particular by deceitfully imitating the behavior of a different follower type --- a phenomenon we term \emph{imitative follower deception}.
Unfortunately, all aforementioned approaches ignore the possibility of a deceitful follower and adopt a simplistic assumption that any follower information the algorithm gathers is truthful.

As we will show in this paper, algorithms designed under the above assumption can be easily manipulated by the follower and may produce highly suboptimal leader strategies as a result.
Consider perhaps the most basic learning setting where the follower has an uncertain \emph{type} (i.e., a full set of payoff parameters) $\theta \in \Theta$. Knowing the set $\Theta$ of all possible types but not the actual $\theta$,  the leader wants to learn the optimal strategy against this follower.
If the follower is truthful, the optimal strategy the leader can learn would be the strong Stackelberg equilibrium. However, if the follower behaves completely according to the payoff parameters of another type $\theta'$ (he may indeed have such an incentive, as we will show), the leader can only learn the optimal strategy against the fake type $\theta'$, which may be highly suboptimal. A fundamental question then is: \emph{what is the optimal strategy the leader can learn or design when facing a deceitful follower?}

In this paper, we put forward a basic Stackelberg model in an attempt to formalize this question and aim to understand how imitative follower deception affects the leader's choice of strategies.
We note that, even though there might be other forms of follower deception as well, imitative deception is arguably the simplest and the most natural deception approach, which we expect to happen the most often in practice.
Our model is Bayesian by nature: the type of the follower is drawn from a finite set $\Theta$; the follower knows his true type, but the leader only has a probabilistic belief about it. To play against a deceitful follower, the leader commits to a \emph{policy} --- a ``menu'' that specifies a mixed strategy to play for each (learned) follower type. Knowing the leader's commitment, a follower of true type $\theta \in \Theta$ can imitate another type $\theta' \in \Theta$, behaving consistently according to the payoffs of $\theta'$ to mislead the leader into learning the fake type $\theta'$. The leader then plays the strategy specified by the committed policy for $\theta'$.
The optimal leader policy maximizes the leader's expected utility, taking into account the follower's optimal imitating strategy in the loop; this is precisely the optimal leader strategy that can be learned in the deceptive setting.

\subsection{Our Contribution}

We make the following contribution.
In Section~\ref{sec:example}, we provide a motivating example to illustrate the mechanisms of imitative follower deception, the loss it brings to the leader, and how the leader can reduce the loss with carefully designed strategies.
Then, we formalize the model in Section~\ref{sec:model}.

In Section~\ref{sec:complexity}, we study the hardness of handling follower deception and draw a relatively complete picture of the problem's complexity landscape.
We consider settings both with and without \emph{incentive compatibility} (IC) constraints; the former requires the leader strategy to incentivize the follower to report his true type while the latter does not. On the negative side, we show that it is hard even to approximate the optimal leader policy within any meaningful ratio unless P = NP, with or without IC constraints; the ratios are also shown to be tight.
The results indicate that deception is the fundamental reason of the intractability as, when there is no deception, the problem can be solved in polynomial time simply by computing the optimal commitment for each follower type separately.
Despite the computational barriers, we provide an MILP (mixed integer linear program) formulation as a practical approach to compute the optimal leader policy, from which a fixed-parameter tractability with respect to the number of follower types can be derived.

In Section~\ref{sec:mixed_policy}, we further observe that the leader can improve her utility using a \emph{mixed policy}, that samples from a distribution a strategy to play for each follower type. We study the respective computational problems above with mixed policies and find that, in addition to the utility improvement they bring, the optimal mixed policy can be computed in polynomial time when IC constraints are imposed; though, the problem remains hard to approximate without IC.

In Section~\ref{sec:experiment}, through simulations, we compare the improvement of leader utility by our approaches and other benchmark approaches in randomly generated games, and examine the benefit of considering follower deception. We conclude in Section~\ref{sec:conclusion} with a short discussion on future directions of this work.

\subsection{Additional Applications and Related Work}

Besides furthering our understanding about the best leader strategy that can be learned when facing a deceitful follower, our model can also be applied directly to a number of applications. For example, in principal-agent contract design (a widely studied Stackelberg model in economics \cite{contract4,contract3,contract1}), a principal may present a menu of contracts to an uncertain agent and ask the agent to choose one that matches his true type. Naturally, the principal's menu must take into account the agent's misreport of his type. Similar ideas apply to exam design for large-scale tests, e.g., for MOOCs (massive open online courses), which have also been modeled as Stackelberg games \cite{li2017game}. In these situations, the tester may need to use different exams for different types of exam takers who usually come from various education backgrounds and have different learning objectives. Again, the exam design will also need to take into account the exam taker's possible misreport about their type  to strategically conceal their strength in the hope of a preferred test.

Deception has been extensively studied in Stackelberg games, particularly in its applications in security domains \cite{Brown2005,Powell2007,zhuang_bier_2010,yin2013optimal,xrtd15,guo2017comparing,schlenker2018deceiving}. However, all previous works focus on designing deceptive strategies for the \emph{leader} (usually, a defender).  In these models, it is the leader who has the informational advantage, whereas in our model the follower has the greater decision-pertinent knowledge and discloses it strategically. To the best of our knowledge, there has been very limited work studying \emph{follower} deception in Stackelberg games.
The most relevant to ours is perhaps \cite{xu2016signaling}, which studies signaling in Bayesian Stackelberg games and also considers the possibility that a follower may strategically misreport his type.  However, the question they study is completely different from ours --- they design signaling schemes for different follower types whereas we design the leader's mixed strategy for each follower type. The contrasting complexity  between their model (polynomial-time solvable for normal form games) and ours (NP-hard) also highlights the intrinsic difference.
Even more importantly, though, in our model type reporting takes the conceptual form of a behaviour inducing message. The common information asymmetry (see e.g.~\cite{rjjx15,xrtd15,dkq16}) is reversed between the leader and the follower.

\section{A Motivating Example}\label{sec:example}

\newcommand{\fone}{\mathsf{1^x}}
\newcommand{\ftwo}{\mathsf{2^x}}
\newcommand{\fthr}{\mathsf{3^x}}
\newcommand{\lone}{\mathsf{1}}
\newcommand{\ltwo}{\mathsf{2}}

We  illustrate how the follower's deceptive behavior may result in highly suboptimal leader strategy, and how we can overcome this issue using carefully-designed leader strategies. For illustrative purpose, we use a real-world security game example, though the underlying phenomenon also generalizes to other applications.

A security game is a Stackelberg game played between a \emph{defender} (the leader) and an \emph{attacker} (the follower). Concretely, we consider a security agency who wants to protect  two conservation areas, i.e., areas $1$ and $2$, from a \emph{poacher}'s attack. The defender can only patrol one of these two areas, while the poacher chooses one area to attack.
At different periods of the year, the poacher's payoffs change due to price fluctuation of wildlife products on black markets. We capture this uncertainty using two possible types of poacher payoffs, i.e.,  \emph{type} $A$ and $B$. Assume that the defender's payoff is \emph{independent} of the poacher's type.  The poacher knows his true type,  but the defender only has a prior belief that each type shows up with probability $0.5$. (This can be obtained via,  e.g., surveying past prices on black markets.) The payoff matrices of the defender (row player) and the two poacher types (column player) are shown below.
\begin{figure}[h!]
\renewcommand{\arraystretch}{1.3}
\newcolumntype{R}{>{\rule[-0.5em]{0pt}{1.5em}$}r<{$}}
\newcolumntype{L}{>{\rule[-0.5em]{0pt}{1.5em}\raggedleft\arraybackslash$}p{5mm}<{$}}
\newcolumntype{C}{>{\rule[-0.5em]{0pt}{1.5em}\centering\arraybackslash$}p{5mm}<{$}}
\hspace{-3mm}
\centering
\begin{tabular}{R|L|L|}
\multicolumn{1}{R}{}&	\multicolumn{1}{C}{\fone}	&  \multicolumn{1}{C}{\ftwo}  \\
\cline{2-3}
 \lone	&	1	& -1	 \\\cline{2-3}
 \ltwo	&	-1	& 0.99	 \\\cline{2-3}
\multicolumn{1}{R}{}& \multicolumn{2}{C}{\rule[-0.2em]{0pt}{2em} \emph{defender}} \\
\end{tabular}
\hspace{10mm}
\begin{tabular}{R|L|L|}
\multicolumn{1}{R}{}&	\multicolumn{1}{C}{\fone}	&  \multicolumn{1}{C}{\ftwo}  \\
\cline{2-3}
 \lone	&	-1	& 1/3  \\\cline{2-3}
 \ltwo	&	3	& -1	 \\\cline{2-3}
\multicolumn{1}{R}{}& \multicolumn{2}{C}{\rule[-0.2em]{0pt}{2em} \hspace{-4mm}\emph{poacher type A}} \\
\end{tabular}
\hspace{10mm}
\begin{tabular}{R|L|L|}
\multicolumn{1}{R}{}&	\multicolumn{1}{C}{\fone}	&  \multicolumn{1}{C}{\ftwo}  \\\cline{2-3}
 \lone	&	-1	& 1	 \\\cline{2-3}
 \ltwo	&	1	& -1	 \\\cline{2-3}
\multicolumn{1}{R}{}& \multicolumn{2}{C}{\rule[-0.2em]{0pt}{2em} \hspace{-4mm}\emph{poacher type B}} \\
\end{tabular}
\end{figure}

The only difference between these poacher types is their value for successful attacks, which is affected by wildlife product prices at that time period. Standard calculation shows that if the poacher has type $A$, the optimal defender strategy is $(3/4, 1/4)$, i.e., patrolling areas $1$  and $2$ with probability $3/4$ and $1/4$, respectively, resulting in poacher utility $0$ at both areas. By the standard assumption, the poacher thus breaks the tie \emph{in favor} of the defender and attacks area~$1$,\footnote{This assumption is without loss of generality. The corresponding solution concept, the \emph{strong Stackelberg equilibrium}, remains the most widely adopted solution concept in the literature of Stackelberg games (see e.g., \cite{osborne1994course,conitzer2006computing,tambe2011security}). Normally, via infinitesimal strategy variations, the leader can induce the follower to play \emph{any} best response action to the leader's benefit. For example, the defender can play $(\frac{3}{4}-\epsilon, \frac{1}{4}+\epsilon)$ with $\epsilon\to 0$, so that the poacher will strictly prefer to attack area 1 while the change to the defender's utility can be arbitrarily small. Our empirical evaluation in Section~\ref{sec:experiment} will further confirm the robustness of our solutions against the tie-breaking issue.\label{fn:tie-breaking}}
yielding defender utility $1/2$. Similarly, the optimal defender strategy is $(1/2,1/2)$ against a  type-$B$ poacher, which induces the poacher to also attack area $1$ and yields defender utility $0$.

Ideally, the defender would like to play the optimal patrolling strategy against the poacher's true type at any time which, however, is unknown to the defender. Nevertheless, if the poacher were to behave truthfully according to his type, the defender can easily learn the poacher's type, e.g., by observing their (different) best responses to strategy $(3/5,2/5)$. Indeed,  previous approaches for learning the optimal leader strategy rely crucially on the  assumption that the follower will truthfully respond (e.g., \cite{letchford2009learning,blum2014learning,haghtalab2016three,roth2016watch}). However, in this example, a type-$A$ poacher has strong incentives to be untruthful: by imitating a type-$B$ poacher and reacting exactly according to the type-$B$ payoff structure, a type-$A$ poacher is able to induce the defender to play $(1/2,1/2)$ --- the optimal strategy against a type-$B$ poacher --- which results in true utility ${\frac{1}{2} \cdot (-1)} + {\frac{1}{2} \cdot (3)} = 1$ for the type-$A$ poacher. This strictly improves upon his previous utility $0$ of reacting truthfully.
Simple calculation shows that a type-$B$ poacher does not benefit from  deception and will behave truthfully. As a result, the defender will not be able to distinguish type $A$ from $B$ under the poacher's strategic deception.

The issue raised above is due to the fact that the defender focuses solely on optimizing her utility without considering the poacher's strategic behavior. We now show how the defender can overcome this issue by taking into account the poacher's deception. It happens that in this  example the optimal policy for the defender  is to still play $(3/4, 1/4)$ for type $A$ and $(1/2,1/2)$ for type $B$, but induce a type-$B$ poacher to \emph{break tie ``against'' the defender} by attacking area $2$.\footnote{This tie breaking can be induced explicitly by playing $(\frac{1}{2}+\epsilon,\frac{1}{2}-\epsilon)$ for an infinitesimal $\epsilon$.}
This  slightly decreases the defender's utility against a type-$B$ poacher (from $0$ to $\frac{1}{2}\cdot(-1) + \frac{1}{2}\cdot0.99 = -0.005$), but it erases the incentive of a type-$A$ poacher to imitate type $B$: if the type-$A$ poacher imitates $B$ to attack area $2$ under defender strategy $(1/2,1/2)$, his expected utility would become $-1$, which is worse than his utility $0$ of behaving truthfully.

\paragraph{Remarks}
(i) This example illustrates an intriguing phenomenon: when there is follower deception, it may be undesirable to always induce tie breaking in favor of the leader, since other tie breaking strategies may enforce more desirable follower behaviors.
If the defender adopts this nuanced strategy, the poacher will have no incentive to deceive. Thus, from the perspective of machine learning, what we are  designing can be viewed as the best strategy the leader can learn in the presence of follower deception.

(ii) One might wonder why the defender does not ignore the poacher's deception by simply playing a single strategy against all types of poachers, which is precisely the optimal commitment in a Bayesian Stackelberg game. This reason is that our more sophisticated approach can achieve better defender utility. In this example, it can be computed that the defender would play $(1/2, 1/2)$ in the Bayesian Stackelberg equilibrium, obtaining utility $0$. However, the strategy we designed above yields defender utility $\frac{1}{4} - \frac{1}{400}$.
A more formal result regarding the utility improvement will be presented in Proposition~\ref{prp:icexists}.

\begin{figure}[t]
\renewcommand{\arraystretch}{1.3}
\newcolumntype{R}{>{\rule[-0.5em]{0pt}{1.5em}$}r<{$}}
\newcolumntype{L}{>{\rule[-0.5em]{0pt}{1.5em}\raggedleft\arraybackslash$}p{5mm}<{$}}
\newcolumntype{C}{>{\rule[-0.5em]{0pt}{1.5em}\centering\arraybackslash$}p{5mm}<{$}}
\hspace{-3mm}
\centering
\begin{tabular}{R|L|L|}
\multicolumn{1}{R}{}&	\multicolumn{1}{C}{\fone}	&  \multicolumn{1}{C}{\ftwo}  \\
\cline{2-3}
 \lone	&	1	& 0	 \\\cline{2-3}
 \ltwo	&	0	& \epsilon	 \\\cline{2-3}
\multicolumn{1}{R}{}& \multicolumn{2}{C}{\rule[-0.2em]{0pt}{2em} \emph{leader}} \\
\end{tabular}
\hspace{10mm}
\begin{tabular}{R|L|L|}
\multicolumn{1}{R}{}&	\multicolumn{1}{C}{\fone}	&  \multicolumn{1}{C}{\ftwo}  \\
\cline{2-3}
 \lone	&	0	& 0.2  \\\cline{2-3}
 \ltwo	&	0.8	& 1	 \\\cline{2-3}
\multicolumn{1}{R}{}& \multicolumn{2}{C}{\rule[-0.2em]{0pt}{2em} \hspace{-4mm}\emph{follower type A}} \\
\end{tabular}
\hspace{10mm}
\begin{tabular}{R|L|L|}
\multicolumn{1}{R}{}&	\multicolumn{1}{C}{\fone}	&  \multicolumn{1}{C}{\ftwo}  \\\cline{2-3}
 \lone	&	0.2	& 0	 \\\cline{2-3}
 \ltwo	&	1	& 0.8	 \\\cline{2-3}
\multicolumn{1}{R}{}& \multicolumn{2}{C}{\rule[-0.2em]{0pt}{2em} \hspace{-4mm}\emph{follower type B}} \\
\end{tabular}
\caption{The price of ignoring follower deception can be huge. Let both follower types appear with probability $0.5$. When deception is ignored, the leader assumes the follower reports truthfully and will play $(0,1)$ for type $A$ and $(1,0)$ for type $B$. However, this gives a type-$B$ follower an incentive to misreport that he has type $A$, yielding leader utility $\epsilon$ overall. Now, if the leader properly deals with follower deception and plays, instead, $(1,0)$ for both types $A$ and $B$, a type-$B$ follower will have no incentive to misreport; the leader is able to obtain utility $0.5$ in this case. The ratio $0.5/\epsilon$ is unbounded when $\epsilon$ is arbitrarily close to $0$. \label{fig:price}}
\end{figure}

\begin{figure}[t]
\renewcommand{\arraystretch}{1.5}
\centering
\begin{tabular}{l|r|r|}
\multicolumn{1}{l}{}&	\multicolumn{1}{c}{${C}$}	&  \multicolumn{1}{c}{${D}$} \\\cline{2-3}
\rule[-0.5em]{0pt}{1.5em} ${C}$	&	$-1$, \tikz[baseline]{\node[draw=red, semithick,fill=white!20,anchor=base,ellipse,inner xsep=-1pt,inner ysep=0pt] (d1) {$-1$};}	&	$-3$, $\phantom{-}0$ \\\cline{2-3}
\rule[-0.5em]{0pt}{1.5em} ${D}$	&	$0$, $-3$	&	$-2$, $-2$ \\\cline{2-3}
\end{tabular}
\begin{tikzpicture}[overlay,remember picture]
\draw[red,semithick,->] (d1) to [in=185,out=70] +(50:7mm) node[right]{\hspace{-0.7mm}\raisebox{0.5mm}{$1$}};
\end{tikzpicture}
\vspace{2mm}
\caption{Follower deception is not always bad. If the follower (column player) deceives the leader (row player) into believing that he has utility $1$, instead of $-1$, for the  action profile $({C},{C})$ (annotated in red), the leader would feel reassured to commit to ${C}$, resulting in utility $-1$ for both players. \label{fig:beneficial_decept}}
\end{figure}

(iii) The price of ignoring follower deception can be huge. Figure~\ref{fig:price} illustrates an example in which the price --- the ratio between the leader's utilities when deception is ignored and when deception is properly dealt with --- is unbounded. (Payoffs are normalized to be in $[0,1]$ for the ratio to be meaningful.)
Nevertheless, perhaps counter-intuitively, follower deception is not always bad for the leader. One example is a Stackelberg game version of the prisoner's dilemma illustrated in Figure~\ref{fig:beneficial_decept}.

\section{The Model}
\label{sec:model}

\subsection{Stackelberg Game Basics}

A Stackelberg game is played between a \emph{leader} and a \emph{follower}.
A normal-form Stackelberg game is given by two matrices  $u^\xl,u^\xf \in \mathbb{R}^{m \times n}$, which are the payoff matrices of the leader (row player) and follower (column player), respectively. We use  $u^\xl(i,j)$ to denote a generic entry of  $u^\xl$, and use  $[m] = \{ 1,...,m\}$ to denote the set of the leader's pure strategies (also called \emph{actions}).
A mixed strategy of the leader is a probabilistic distribution over $[m]$, denoted by a vector $\mathbf{x} \in \Delta_m = \{ \mathbf{p} \geq 0 : \sum_{i \in [m]} p_i = 1 \}$. With slight abuse of notation, we denote by $u^\xl(\mathbf{x},j)$ the expected utility $\sum_{i \in [m]} x_i \cdot u^\xl(i,j)$.
Notation for the follower is the same by changing the labels. We will sometimes write a pure strategy $i$ in positions where a mixed strategy is expected, in which case it represents the same strategy in its mixed strategy form, i.e., a basis vector $\mathbf{e}_{i}$.

In a Stackelberg game, the leader moves first by committing to a mixed strategy $\mathbf{x}$, and the follower then best responds to $\mathbf{x}$.
Without loss of generality, it is assumed that the follower's best response is a pure strategy with ties broken in favor of the leader if there are multiple best responses.\footnote{See footnote~\ref{fn:tie-breaking}.}
Under this assumption, the leader mixed strategy that maximizes her expected utility leads to a \emph{strong Stackelberg equilibrium} (SSE), which is the standard solution concept of Stackelberg games. Formally, denote by $\BR(\mathbf{x}):=\arg\max_{j\in[n]} u^\xf( \mathbf{x}, j)$ the set of follower best responses to a leader strategy $\mathbf{x}$; a pair of strategies $\mathbf{x}^*$ and $j^*$ forms an SSE if
\begin{align*}
	\langle \mathbf{x}^*, j^* \rangle \in \arg\max\nolimits_{\mathbf{x} \in \Delta_m, j \in \BR(\mathbf{x}) } u^\xl (\mathbf{x}, j ). \label{eq:sse}
\end{align*}

\subsection{Stackelberg Games with Imitative Follower Deception}

We now describe a basic model that captures (imitative) follower deception; the reader may refer back to our motivating example as an instantiation of this model.
We consider a leader who faces a follower with an uncertain type, which falls in a discrete set $\Theta$.  Each type $\theta \in \Theta$ corresponds to a different follower payoff matrix $u_\theta^\xf \in \mathbb{R}^{m\times n}$.  For ease of presentation, we will assume that the leader's payoff does \emph{not} depend on $\theta$, though we remark that all our results easily generalize to the case with type-dependent leader payoffs.
To model the leader's prior knowledge regarding the follower's type, we adopt the classic Bayesian perspective and assume  that the leader has  a prior distribution $\pi$, i.e., each type $\theta$ appears with probability $\pi_\theta$. Note that $\pi, \Theta$ and the payoff matrices are common knowledge.

We assume that the leader can commit to a \emph{policy} --- a ``menu'' that specifies a mixed strategy $\mathbf{x}^{\theta}$ to be played against each follower type $\theta$.
In addition, we assume that a follower best response $j^\theta \in \BR_{{\theta}}(\mathbf{x}^\theta)$ is also specified for each $\theta$ in the policy, where $\BR_{{\theta}}(\mathbf{x}) := \arg\max_{j} u_{{\theta}}^\xf (\mathbf{x},j)$ now denotes the {best response set} of a type-$\theta$ follower against leader strategy $\mathbf{x}$; hence, in the case of a tie, the follower will be induced to play this specific response.
Note that, same as in the motivating example, $j^\theta$ is not necessarily the one that maximizes the utility the leader obtains on follower type $\theta$; rather, the leader would prefer to induce the follower to a carefully chosen follower action that discourages follower deception to some extent and benefits the leader's overall utility.

After observing the leader's policy, a follower of type $\theta$ reports a type $\hat{\theta}$ to the leader, and the leader then plays the mixed strategy $\mathbf{x}^{\hat{\theta}}$ as specified by her policy. The follower then ``best'' responds to $\mathbf{x}^{\hat{\theta}}$ with $j^{\hat{\theta}}$, as if his type is $\hat{\theta}$. Naturally, the follower will report a type that maximizes his actual expected utility, i.e., $\hat{\theta} = \arg\max_{\beta \in \Theta}  u_\theta^\xf(\mathbf{x}^{\beta}, j^{\beta})$, resulting in the leader to obtain utility $u^\xl(\mathbf{x}^{\hat{\theta}},j^{\hat{\theta}})$.
The reporting step in our model is straightforward in various applications (see Section~\ref{sec:intro}), whereas in some other applications, reporting abstracts the process that the leader learns the follower's type by interacting with the follower (as in the motivating example).

\smallskip
In summary, the game proceeds as follows:
\begin{itemize}
\item[1.]
The leader commits to a policy $\sigma = \{ \otc^{\theta} \}_{\theta \in \Theta}$ that prescribes, for each reported follower type $\theta \in \Theta$, an \emph{outcome} $o^\theta = \langle \mathbf{x}^\theta, j^\theta \rangle$ such that $j^{\theta}  \in \BR_{{\theta}}(\mathbf{x}) := \arg\max_{j} u_{{\theta}}^\xf (\mathbf{x},j)$.

\item[2.]
After observing the leader policy $\sigma$, a follower of type $\theta$, who appears with probability $\pi_\theta$, reports a best type $\hat{\theta}(\sigma) = \arg\max_{\beta \in \Theta}  u_\theta^\xf(\mathbf{x}^{\beta}, j^{\beta})$ that maximizes his actual utility; same as the standard SSE assumption, we assume the follower breaks ties by reporting a type in favor of the leader. This results in overall leader utility
\begin{align*}
     U^\xl(\sigma) :=  \sum_{\theta \in \Theta } \pi_{\theta} \cdot u^\xl ( \mathbf{x}^{\hat{\theta}(\sigma)}, j^{\hat{\theta}(\sigma)} ).
\end{align*}
\end{itemize}

The computational task we examine is $\max_{\sigma} U^\xl(\sigma)$. For convenience, we will also write $u_\theta^\xf (\otc) = u_\theta^\xf (\mathbf{x}, j)$ and $u^\xl (\otc) = u^\xl (\mathbf{x}, j)$ for an outcome $o=\langle \mathbf{x}, j \rangle$.

\paragraph{Remarks}
Our model can be viewed as a generalization of the classic Bayesian Stackelberg game (BSG) model where the leader's policy is restricted to prescribe the same mixed strategy for all follower types \cite{conitzer2006computing}.
Yet, our model cannot be reduced to a classic BSG, a notable difference being that the leader's strategy (policy) space in our model is \emph{non-convex} due to the extra component $j^\theta$ in the prescribed outcomes whereas it is convex for a BSG.

\subsection{Incentive Compatibility (IC)}
One can impose in addition the IC constraints on $\sigma$ so that reporting truthfully is a (weakly) dominant strategy for every follower type, i.e., $ u_\theta^\xf(\mathbf{x}^{\theta}, j^{\theta}) \geq u_\theta^\xf(\mathbf{x}^{\beta}, j^{\beta}) $ for any $\beta \in \Theta$. We will consider model variants both with and without IC constraints.
The following result shows that an IC policy always exists, and even when IC constraints are imposed, our approach always achieves at least the leader utility in a Bayesian Stackelberg equilibrium. This echoes our observation in the motivating example.

\begin{proposition}
\label{prp:icexists}
There always exists an IC policy, and the optimal IC policy achieves at least the leader utility in a Bayesian Stackelberg equilibrium.
\end{proposition}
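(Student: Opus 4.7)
The plan is to explicitly construct an IC policy whose leader utility equals the leader utility in a Bayesian Stackelberg equilibrium (BSE), from which both claims follow at once, since the optimal IC policy can only do at least as well as any particular IC policy.

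First I would pick any single leader mixed strategy $\mathbf{x}^*$ and, for each type $\theta \in \Theta$, set $\mathbf{x}^\theta := \mathbf{x}^*$ and $j^\theta \in \arg\max_{j \in \BR_{\theta}(\mathbf{x}^*)} u^\xl(\mathbf{x}^*, j)$ (i.e., a type-$\theta$ best response with ties broken in favor of the leader). This yields a valid policy $\sigma^*=\{o^\theta\}_{\theta\in\Theta}$ with $o^\theta = \langle\mathbf{x}^*, j^\theta\rangle$, because the required condition $j^\theta \in \BR_{\theta}(\mathbf{x}^\theta)$ holds by construction.

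Second, I would verify IC. A type-$\theta$ follower reporting $\beta$ induces outcome $\langle\mathbf{x}^*, j^\beta\rangle$ and obtains utility $u_\theta^\xf(\mathbf{x}^*, j^\beta)$. Since $j^\theta \in \BR_{\theta}(\mathbf{x}^*)$, we have $u_\theta^\xf(\mathbf{x}^*, j^\theta) \geq u_\theta^\xf(\mathbf{x}^*, j')$ for every pure follower action $j'$, in particular $j'=j^\beta$. Hence truthful reporting is weakly dominant for every type, proving existence of an IC policy.

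Third, specializing $\mathbf{x}^*$ to the BSE mixed strategy, the leader utility under $\sigma^*$ is $\sum_{\theta\in\Theta} \pi_\theta \cdot u^\xl(\mathbf{x}^*, j^\theta)$, which by the definition of the (strong) BSE with leader-favored tie breaking equals exactly the BSE leader utility. Therefore the optimal IC policy achieves at least this value. The only real subtlety, and the step I would be most careful about, is the tie-breaking alignment: the $j^\theta$ chosen in my construction must coincide with the best responses selected in the BSE solution concept; this is handled precisely by the $\arg\max$ over $\BR_{\theta}(\mathbf{x}^*)$ that breaks ties in favor of the leader, matching the convention stated earlier for SSE.
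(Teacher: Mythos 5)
Your proposal is correct and follows essentially the same construction as the paper: prescribe the BSE mixed strategy for every reported type with the leader-favoring best response $j^\theta$, observe that truthful reporting already gives each type its true best response to that fixed strategy (so IC holds), and note the resulting utility equals the BSE utility. Your explicit verification of the IC inequality via $u_\theta^\xf(\mathbf{x}^*, j^\theta) \geq u_\theta^\xf(\mathbf{x}^*, j^\beta)$ is just a more spelled-out version of the paper's one-line justification.
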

\begin{proof}
Let $\mathbf{x}$ be the leader strategy in a Bayesian Stackelberg equilibrium (BSE), and let $j^\theta$ be the best response of a type-$\theta$ follower in the BSE for each $\theta$, i.e., $j^\theta \in \arg\max_{j\in \BR_{\theta}(\mathbf{x}) } u^\xl (\mathbf{x},j)$.
A policy that prescribes $o^\theta = \langle \mathbf{x} , j^\theta \rangle$ for each follower type $\theta \in \Theta$ is trivially IC because by reporting truthfully the follower will also be induced to play his \emph{true} best response $j^\theta$ to $\mathbf{x}$.
In addition, it offers the leader utility $\sum_{\theta \in \Theta} \pi_\theta \cdot u^\xl (\mathbf{x},j^\theta)$ --- exactly what she obtains in the BSE.
\end{proof}

\section{Computing the Optimal Policy}
\label{sec:complexity}

\renewcommand{\fone}{\mathsf{1^F}}
\renewcommand{\ftwo}{\mathsf{2^F}}
\renewcommand{\fthr}{\mathsf{3^F}}

\newcommand{\opt}{\mbox{\textnormal{\textsc{Opt-Policy}}}}
\newcommand{\optic}{\mbox{\textnormal{\textsc{Opt-Policy-IC}}}}
\newcommand{\optx}{\mbox{\textnormal{\textsc{Opt-XPolicy}}}}
\newcommand{\optxic}{\mbox{\textnormal{\textsc{Opt-XPolicy-IC}}}}

In this section, we study the complexity and algorithms for computing the optimal leader policy. We will refer to the problem of computing the optimal policy {\opt} and, when IC constraints are imposed, {\optic}.

\subsection{Hardness of Approximation}

We show in Theorems~\ref{thm:apx_optBay} and \ref{thm:apx_icBay} that it is NP-hard even just to approximate the optimal policy to within a meaningful ratio, with or without IC constraints, and even when the follower has only a small number of actions.
The bounds in the inapproximability results are essentially tight by Theorem~\ref{thm:1_Theta_apx}.
In contrast, there is an efficient algorithm for a small follower type set, which we show in Theorem~\ref{thm:opt_fixTheta} after we present an MILP (mixed integer linear program)-based algorithm for the general case.
Our approximation ratios are all \emph{multiplicative}; by convention, we normalize the leader's payoff parameters to be in $[0,1]$ in order for the ratios to be meaningful.

\begin{theorem}
\label{thm:apx_optBay}
For any constant $\epsilon > 0$, {\opt} does not admit any polynomial-time $\frac{1} {{(|\Theta|-1)}^{1-\epsilon}}$-approximation algorithm unless P = NP, even when the follower has only three actions.
\end{theorem}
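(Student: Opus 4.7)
The plan is to reduce from \textsc{Max-Independent-Set} and invoke Håstad's inapproximability theorem: unless P = NP, the independence number $\alpha(G)$ of an $n$-vertex graph is not polynomial-time approximable within $n^{-(1-\epsilon)}$ for any constant $\epsilon > 0$. Given a graph $G = (V, E)$ with $n = |V|$, I will construct an instance of $\opt$ with exactly $|\Theta| = n+1$ types---one type $\theta_v$ per vertex $v \in V$ plus a baseline type $\theta_0$---such that the optimum leader utility equals $\alpha(G)/(n+1)$ after normalizing payoffs to $[0,1]$. Since $|\Theta|-1 = n$, any $(|\Theta|-1)^{-(1-\epsilon)}$-approximation algorithm for $\opt$ would then approximate $\alpha(G)$ within the same factor, contradicting Håstad.

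For the game itself, the leader has one ``signature'' pure action per vertex (plus a trivial safe action), and the follower has only three actions: a ``good'' response $g$ (awarding leader utility $1$), a ``safe'' response $s$ (awarding leader utility $0$), and a ``punish'' response $p$ used to discipline the leader. I take the prior to be uniform, $\pi_\theta = 1/(n+1)$, and design each $u_{\theta_v}^\xf$ so that (i) $g$ is the unique best response of type $\theta_v$ to its signature pure action $v$; (ii) $s$ is the best response to a designated ``safe'' leader strategy; and, crucially, (iii) type $\theta_u$ strictly prefers the outcome $\langle v, g \rangle$ over its own outcome $\langle u, g \rangle$ if and only if $\{u, v\} \in E$. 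Property (iii) is the edge encoding: follower payoffs for action $g$ are calibrated so that imitating a neighbor's outcome is strictly profitable, while imitating a non-neighbor is not.

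Given this gadget, a case analysis reduces attention to policies that assign to each $\theta_v$ either the good outcome $\langle v, g\rangle$ (``selected'') or the safe outcome (``unselected''); the punish action $p$ and any mixed leader strategy not of the prescribed form will be arranged to be strictly worse than the safe benchmark, ruling out any other candidate. If $S \subseteq V$ is the selected set, the policy is deception-free iff no pair of selected vertices is adjacent, i.e., iff $S$ is independent; it then yields leader utility $|S|/(n+1)$. Any non-independent $S$ admits a deceptive report that collapses two good outcomes into one, strictly decreasing leader utility below $|S|/(n+1)$. Thus the optimal policy value is exactly $\alpha(G)/(n+1)$, completing the reduction.

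The main obstacle is the construction of the three-action gadget: simultaneously enforcing (i)--(iii) and preventing unintended mixed leader strategies---for instance, strategies interpolating between two signature actions---from sneaking in as a way to create a ``hybrid'' outcome that bypasses the edge encoding. I would address this using the punish action: calibrate $u_{\theta_v}^\xf(\cdot, p)$ so that whenever the leader deviates from any canonical signature or safe strategy, $p$ becomes the unique best response, yielding leader utility strictly below zero. Verifying that these payoff choices remain consistent across all $n$ adjacency patterns while keeping the instance polynomial-size is the crux; once that is done, the quantitative matchup to $n^{-(1-\epsilon)}$ is immediate from Håstad.
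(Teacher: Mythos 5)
There is a genuine gap, and it lies in the step you treat as routine: the claim that a non-independent selected set $S$ loses utility because a deceptive report ``collapses two good outcomes into one.'' In this model the leader's utility from a follower of true type $\theta_u$ who reports $\theta_v$ is $u^\xl(\mathbf{x}^{\theta_v}, j^{\theta_v})$ --- it depends only on the outcome prescribed for the \emph{reported} type, not on who reports it. Since your good response $g$ awards the leader utility $1$ under every signature action, a type $\theta_u$ that imitates a neighbor $\theta_v$ still lands the leader on an outcome worth $1$. Deception among selected types therefore costs the leader nothing, and the policy that selects \emph{every} vertex is feasible (your property (i) makes each $\langle v, g\rangle$ a valid outcome) and earns roughly $n/(n+1)$ regardless of the graph. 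The claimed optimum $\alpha(G)/(n+1)$ is thus wrong and the reduction collapses. Making the imitated outcome worth less to the leader only when an imitator reaches it would require the leader's payoff to depend on the follower's true type at a fixed strategy profile, which the outcome-based utility $U^\xl(\sigma)$ does not permit.

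The paper's construction avoids this by inverting your logic. Rather than rewarding truthful types and trying to punish deception, it makes deception the \emph{only} source of leader utility: a dummy type $\theta_*$ with prior probability $0$ is the unique report under which the leader-profitable follower action $\fone$ can be induced (for every real type $\theta_v$ that action is strictly dominated, so any report of $\theta_v$ yields the leader $0$). The leader's overall utility then literally counts the real types incentivized to imitate $\theta_*$, and the edge gadget is built so that two adjacent types can never simultaneously be incentivized to do so. The independent set therefore corresponds to the set of \emph{deceiving} types, not the truthful ones. If you want to repair your proof, this reversal --- profit only from imitators of a zero-probability type, and let the graph constrain which types can jointly be induced to imitate it --- is the idea you are missing; the rest of your setup (three follower actions, signature leader actions per vertex, the Zuckerman/H{\aa}stad bound) is compatible with it.
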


\begin{proof}
We show a reduction from the \textsc{Max-Independent-Set} problem, which asks for the size of the maximum independent set in a graph $G = (V,E)$.
A set of nodes $V' \subseteq V$ is an \emph{independent set} of $G$ if no edge in $E$ connects any pair of nodes in $V'$; an independent set is maximum if there exists no other independent set with a larger size.
It is known that no polynomial-time $\frac{1} {{|V|}^{1-\epsilon}}$-approximation algorithm exists for \textsc{Max-Independent-Set}, unless P = NP \cite{zuckerman2006linear}.

We construct a Stackelberg game with $|V|+1$ follower types $\Theta = \{ \theta_* \} \cup \{ \theta_v : v \in V \}$, where each $\theta_v$ corresponds to node $v$.
Let $\pi_{\theta_*} = 0$ and $\pi_{\theta} = \frac{1}{|\Theta|-1}$ for all other $\theta$.
The leader has $2|V|+1$ actions $\{a_v\!: v \!\in\! V \} \cup \{ b_v \!: v \!\in\! V\} \cup \{a_0\}$; and the follower has
three actions $\{ \fone, \ftwo, \fthr \}$.
The payoffs are given in Figure~\ref{fig:thm:apx_optBay}, where $\mathcal{N}(v)$ denotes the set of neighbouring nodes of $v$, and all the empty entries are $0$.

\begin{figure}[t]
\renewcommand{\arraystretch}{1.2}
\newcolumntype{R}{>{$}r<{$}}
\newcolumntype{C}{>{$}c<{$}}
\centering
	\begin{tabular}{r|C|C|C|}
		\multicolumn{1}{R}{}&	\multicolumn{1}{C}{\fone}	&  \multicolumn{1}{C}{\ftwo}  &  \multicolumn{1}{C}{\fthr} \\\cline{2-4}
		\rule[-0.5em]{0pt}{1.5em} $a_0$	&	0.5	& 1	    & 1     \\\cline{2-4}
        \rule[-0.5em]{0pt}{1.5em} $a_v$	&		& 0.5	& 0.5   \\\cline{2-4}
		\rule[-0.5em]{0pt}{1.5em} $b_v$	&		& 1	    & 1     \\\cline{2-4}
		\rule[-0.5em]{0pt}{1.5em} $a_{v'}: v' \in \mathcal{N}(v)$	
                                        &		& 1	    & 1     \\\cline{2-4}
		\rule[-0.5em]{0pt}{1.5em} \emph{otherwise}	
                                        &		& 0.5	& 1     \\\cline{2-4}
		\multicolumn{1}{r}{} & \multicolumn{3}{c}{\rule[-0.2em]{0pt}{2em} \!\!\!\emph{follower type} $\theta_v$\!\!\!} \\
	\end{tabular}
\hspace{4mm}
	\begin{tabular}{r|C|C|C|}
        \multicolumn{1}{R}{}&	\multicolumn{1}{C}{\fone}	&  \multicolumn{1}{C}{\ftwo}  &  \multicolumn{1}{C}{\fthr} \\\cline{2-4}
		\rule[-0.5em]{0pt}{1.5em} \emph{any} $i$ &	1	& 	    &      \\\cline{2-4}
        \multicolumn{4}{r}{\rule[-0.2em]{0pt}{2em}  \emph{follower type} $\theta_*\!\!\!\!$} \\
	\end{tabular}
\hspace{4mm}
	\begin{tabular}{r|C|C|C|}
		\multicolumn{1}{R}{}&	\multicolumn{1}{C}{\fone}	&  \multicolumn{1}{C}{\ftwo}  &  \multicolumn{1}{C}{\fthr} \\\cline{2-4}
		\rule[-0.5em]{0pt}{1.5em} \emph{any} $i$ &	1	& 	    &      \\\cline{2-4}
        \multicolumn{1}{r}{} & \multicolumn{3}{c}{\rule[-0.2em]{0pt}{2em} \emph{leader}} \\
	\end{tabular}
\caption{Payoffs matrices (proof of Theorem~\ref{thm:apx_optBay}). \label{fig:thm:apx_optBay}}
\end{figure}

The following are a few observations about the game:
\begin{itemize}
\item
For follower type $\theta_*$, $j = \fone$ strictly dominates all the other actions.
A follower can only be induced to play $j = \fone$ when $\theta_*$ is reported, in which case the leader gets utility $1$.

\item
For each $\theta_v$, when they are reported, $j=\fone$ cannot be induced by any leader strategy because it is strictly dominated by $j = \ftwo$.
As a result, the leader gets $0$ once $\theta_v$ is reported.

\item
Given the above, the leader's overall utility is proportional to the number of follower types $\theta_v$ that are motivated to report $\theta_*$.
\end{itemize}

Now for any independent set $V'$ of size $k$ in $G$, we can construct the following leader policy $\sigma$:
$\otc^{\theta_*} = \langle \vecone{a_0}, \fone \rangle$,
$\otc^{\theta_v} = \langle \vecone{a_v}, \ftwo \rangle$ if $v \in V'$,
and $\otc^{\theta_v} = \langle \vecone{b_v}, \ftwo \rangle$ if $v \notin V'$. Given this policy, all types $\theta_v$, $v \in V'$, will find reporting $\theta_*$ optimal, while all the others are better-off reporting truthfully.
As a result, the leader receives overall utility $\frac{k}{|\Theta|-1}$.

Conversely, we claim that for any policy $\sigma$, the set of nodes $v$ of which the corresponding types $\theta_v$ are incentivized to report $\theta_*$ form an independent set, i.e., $\{v\in V: \hat{\theta}_v(\sigma) = \theta_*\}$ is an independent set.
To see this, suppose for a contradiction that types $\theta_v$ and $\theta_{v'}$ are incentivized to report $\theta_*$ but $(v,v') \in E$. By our observations above, both $\theta_v$ and $\theta_{v'}$ will be induced to play $\fone$ and obtain utility $0.5$ each.
Thus, we need the following in order for them to have no incentive to report another type.
\begin{itemize}
\item[(i)]
$x_{a_{v'}}^{\theta_{v'}} =0$, since otherwise the fourth row of the payoff matrix of $\theta_v$ would be chosen with some probability, and $\theta_v$ would be better-off reporting $\theta_{v'}$ to obtain utility strictly greater than $0.5$.

\item[(ii)]
$x_{a_0}^{\theta_{v'}} = x_{b_{v'}}^{\theta_{v'}} =0$, and $x_{a_{v''}}^{\theta_{v'}} =0$ for all $v'' \in \mathcal{N}(v')$, since otherwise one of the first, third and fourth rows of the payoff matrix of $\theta_{v'}$ would be chosen with some probability, and $\theta_{v'}$ would be better-off reporting truthfully to obtain utility strictly greater than $0.5$.
\end{itemize}
It follows that, now $\mathbf{x}^{\theta_{v'}}$ will pick the last row of the payoff matrix with probability $1$, so $\fthr$ becomes the only best response of a type-$\theta_{v'}$ player against $\mathbf{x}^{\theta_{v'}}$.
Thus, we have $o^{\theta_{v'}} = \langle \mathbf{x}^{\theta_{v'}}, \fthr \rangle$, in which case, however, $\theta_{v'}$ is able to obtain utility $1$ by reporting truthfully, which contradicts the assumption that $\theta_{v'}$ is incentivized to report $\theta^*$.

Therefore, the number of follower types that are motivated to report $\theta_*$ by the optimal policy  is exactly to the size of the maximum independent set (and proportional to the overall leader utility).
Any $\frac{1} {{(|\Theta|-1)}^{1-\epsilon}}$-approximation algorithm would also provide a $\frac{1} {{|V|}^{1-\epsilon}}$-approximation to the \textsc{Max-Independent-Set} problem, which exists unless P = NP.
This completes the proof.
\end{proof}

Next, we show that the same inapproximability result for {\optic}. Our reduction will still be from \textsc{Max-Independent-Set} but the underlying idea and the instance constructed are different: the previous reduction maps follower types that \emph{violate} IC to an independent set, which cannot be applied when IC constraints are imposed.

\begin{theorem}
\label{thm:apx_icBay}
  For any constant $\epsilon > 0$, {\optic} does not admit any polynomial-time $\frac{1} {{|\Theta|}^{1-\epsilon}}$-approximation algorithm unless P = NP, even when the follower has only three actions.
\end{theorem}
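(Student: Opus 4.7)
The plan is a polynomial-time reduction from \textsc{Max-Independent-Set}. The conceptual shift relative to Theorem~\ref{thm:apx_optBay} is that IC forbids the ``strategic lying to $\theta_*$'' trick used there; so the independent set must instead be embedded inside the set of types that receive leader-utility-$1$ outcomes in an IC policy, with each edge of $G$ corresponding to a pairwise IC conflict between the two endpoints' own ``high'' outcomes $\langle a_v, \fone \rangle$.

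Given a graph $G=(V,E)$, I would construct an instance with types $\Theta = \{\theta_v\}_{v \in V}$ under uniform prior, a leader action $a_v$ per node plus an auxiliary ``safe'' action $s$, and three follower actions $\fone, \ftwo, \fthr$. Set $u^\xl(a_v,\fone)=1$ and $0$ elsewhere. For follower $\theta_v$, the key entries are $u^\xf_{\theta_v}(a_v,\fone)=1$ (so $\langle a_v,\fone\rangle$ is a valid outcome with $\theta_v$-utility $1$); for each $u\in\mathcal{N}(v)$, $u^\xf_{\theta_v}(a_u,\fone)=1+\delta$ (so that $\theta_u$'s own-high is a profitable deviation for $\theta_v$, breaking IC if both are in $W$) together with $u^\xf_{\theta_v}(a_u,\fthr)=M$ for a parameter $M\gg 1$ (so that $\fthr$ strictly dominates $\fone$ at $a_u$ for $\theta_v$); and $u^\xf_{\theta_v}(s,\ftwo)=1$, making $\langle s,\ftwo\rangle$ a safe outcome with leader utility $0$; all other entries are $0$. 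For the forward direction, any independent set $W\subseteq V$ induces the IC policy that plays $\langle a_v,\fone\rangle$ for each $v\in W$, the ``escape'' $\langle a_u,\fthr\rangle$ for any $v\notin W$ that has a neighbor $u\in W$ (yielding follower utility $M$ and leader utility $0$), and $\langle s,\ftwo\rangle$ for the rest; a short check verifies IC for all these outcomes, and the leader utility equals $|W|/|V|$.

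The main obstacle is the reverse direction: showing that no IC policy exceeds $\alpha(G)/|V|$. The challenge is a mixed-strategy escape in which the leader blurs each type's outcome across several $a_v$'s to extract positive utility from both endpoints of an edge while IC still holds through tie-breaking. The role of the large $M$ is to rule exactly this out: the best-response constraint $\fone\in\BR_{\theta_v}(\mathbf{x}^{\theta_v})$ forces the mass of $\mathbf{x}^{\theta_v}$ on any neighbor action $a_u$, $u\in\mathcal{N}(v)$, to be at most $O(1/M)$ times the mass $x^{\theta_v}_{a_v}$ on $a_v$ itself. Plugging this into the IC inequality across an edge $(v,u)\in E$ yields, up to $O(1/M)$ error, $x^{\theta_v}_{a_v}\geq (1+\delta)\,x^{\theta_u}_{a_u}$ together with its symmetric counterpart, which jointly force $x^{\theta_v}_{a_v}=x^{\theta_u}_{a_u}=0$. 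Consequently, the set $W=\{v : u^\xl(o^{\theta_v})>0\}$ of any IC policy is an independent set of $G$, giving leader utility at most $|W|/|V|\leq\alpha(G)/|V|$. With $|\Theta|=|V|$, a polynomial-time $\frac{1}{|\Theta|^{1-\epsilon}}$-approximation for $\optic$ would thus yield a $\frac{1}{|V|^{1-\epsilon}}$-approximation for \textsc{Max-Independent-Set}, contradicting \cite{zuckerman2006linear} unless P = NP.
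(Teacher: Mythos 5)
Your high-level plan coincides with the paper's: reduce from \textsc{Max-Independent-Set}, realize the independent set as the set of types that receive a leader-utility-$1$ outcome of the form $\langle a_v, \fone\rangle$, prove the forward direction by exhibiting a natural IC policy, and prove the reverse direction via a pairwise IC conflict across each edge. Your forward direction checks out. The reverse direction, however, has a genuine gap, created by a loophole your gadget leaves open: for a type $\theta_v$, every leader action $a_w$ with $w\notin\mathcal{N}(v)\cup\{v\}$ gives $\theta_v$ payoff $0$ under \emph{all three} follower actions, so $\fone$ remains a (tied) best response to the pure strategy $a_w$, and the outcome $\langle a_w,\fone\rangle$ still hands the leader utility $1$ while giving $\theta_v$ utility $0$. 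Your edge argument only controls $x^{\theta_v}_{a_v}$ and the mass on neighbour actions $a_u$, $u\in\mathcal{N}(v)$; it says nothing about the mass on these irrelevant actions, which is what actually carries the leader's utility $u^\xl(\mathbf{x}^{\theta_v},\fone)=\sum_{w}x^{\theta_v}_{a_w}$. Concretely, take $V=\{1,2,3\}$ with the single edge $(1,2)$, so the maximum independent set has size $2$. In your instance the policy $o^{\theta_1}=o^{\theta_2}=o^{\theta_3}=\langle a_3,\fone\rangle$ is valid ($\fone$ ties as a best response to $a_3$ for $\theta_1$ and $\theta_2$, and is the strict best response for $\theta_3$), is trivially IC (all three outcomes coincide), and earns the leader utility $1>2/3$. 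So the set $W$ you define need not be independent, and the reduction does not preserve the optimum.

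The paper's gadget closes exactly this loophole by making the domination of $\fone$ exact and global rather than approximate and local: for type $\theta_v$, action $\fthr$ gives the follower $1$ on every leader row except $a_v$ (where the whole row is $0$), while $\fone$ gives at most $0.5<1$ everywhere. Hence $\fone\in\BR_{\theta_v}(\mathbf{x})$ forces $\mathbf{x}$ to be the \emph{pure} strategy $a_v$ --- no parking on irrelevant actions and no $O(1/M)$ bookkeeping. Two adjacent types in the ``$\fone$-set'' must then receive exactly $\langle a_v,\fone\rangle$ and $\langle a_{v'},\fone\rangle$, each worth $0$ to its own type but $0.5$ to the neighbouring type, which violates IC outright. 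If you want to salvage your construction, you would at minimum need $\fthr$ to strictly beat $\fone$ on every row other than $a_v$ (including $s$ and all non-neighbour rows $a_w$), and you would then still have to argue that the residual $O(1/M)$ leader utility, accumulated over all types and edges, stays below the approximation threshold; the paper's exact, purely combinatorial version avoids both complications.
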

\begin{proof}
    We show a reduction from the \textsc{Max-Independent-Set} problem.
   Given an instance of \textsc{Max-Independent-Set} specified by a graph $G = (V,E)$, we construct a game with $|V|$ follower types $\Theta = \{ \theta_v : v \in V \}$; each $\theta_v$ corresponds to a node $v$ and appears with probability $\frac{1}{|\Theta|}$.
    The leader has $2|V|$ actions $\{a_v : v \in V \} \cup \{b_v: v \in V\}$; and the follower has
  three actions $\{ \fone, \ftwo, \fthr \}$.
  The payoffs are given in Figure~\ref{fig:thm:apx_icBay}, where $\mathcal{N}(v)$ denotes the set of neighbouring nodes of $v$, and all the empty entries are $0$.

\begin{figure}[t]
\renewcommand{\arraystretch}{1.2}
\newcolumntype{R}{>{$}r<{$}}
\newcolumntype{C}{>{$}c<{$}}
\centering
	\begin{tabular}{r|C|C|C|}
		\multicolumn{1}{R}{}&	\multicolumn{1}{C}{\fone}	&  \multicolumn{1}{C}{\ftwo}  &  \multicolumn{1}{C}{\fthr} \\\cline{2-4}
		\rule[-0.5em]{0pt}{1.5em} $a_v$	&		& \phantom{0.0}	& \phantom{0.0} \\\cline{2-4}
		\rule[-0.5em]{0pt}{1.5em} $b_v$	&		& 1	& 1 \\\cline{2-4}
		\rule[-0.5em]{0pt}{1.5em} $a_{v'}: v'\in \mathcal{N}(v)$	&	0.5	& 	&  1 \\\cline{2-4}
		\rule[-0.5em]{0pt}{1.5em} \emph{otherwise}	&		& 	&  1 \\\cline{2-4}
		\multicolumn{4}{r}{\rule[-0.2em]{0pt}{2em} \emph{follower type} $\theta_v\!\!$} \\
	\end{tabular}
	\hspace{5mm}
	\begin{tabular}{r|C|C|C|}
		\multicolumn{1}{R}{}&	\multicolumn{1}{C}{\fone}	&  \multicolumn{1}{C}{\ftwo}  &  \multicolumn{1}{C}{\fthr} \\\cline{2-4}
		\rule[-0.5em]{0pt}{1.5em} \emph{any} $i$ &	1	& 	    &      \\\cline{2-4}
        \multicolumn{1}{r}{} & \multicolumn{3}{c}{\rule[-0.2em]{0pt}{2em} \emph{leader}}
	\end{tabular}
\caption{Payoffs matrices (proof of Theorem~\ref{thm:apx_icBay}). \label{fig:thm:apx_icBay}}
\end{figure}

Now for any independent set $V'$ of size $k$ in $G$, consider a leader policy $\sigma$ with:
$\otc^{\theta_v} = \langle \vecone{a_v}, \fone \rangle$ if $v \in V'$,
and $\otc^{\theta_v} = \langle \vecone{b_v}, \ftwo \rangle$ if $v \notin V'$.
In this policy, all types find truthful report to be optimal. Hence, the policy is IC and offers the leader overall utility $\frac{k}{|\Theta|}$.

Conversely, for any IC policy $\sigma$, we claim that the set of nodes $v$ of which the corresponding follower types $\theta_v$ are motivated to respond with $j=\fone$ form an independent set, i.e., $\{v\in V: o^{\theta_v} = \langle i, \fone \rangle \text{ for some } i\}$ is an independent set.
Suppose for a contradiction that $\theta_v$ and $\theta_{v'}$ both respond with $j=\fone$ while $(v,v') \in E$.
Observe that, when $\theta_v$ is reported, $j=\fone$ can be induced only by leader strategy $\vecone{a_v}$ because it is strictly dominated by $j = \fthr$ in all other cases.
Thus, $\otc^{\theta_v} = \langle \vecone{a_v}, \fone \rangle$ and $\otc^{\theta_{v'}} = \langle \vecone{a_{v'}}, \fone \rangle$, in which case both $\theta_v$ and $\theta_{v'}$ obtain utility $0$, so $\theta_{v'}$ would be better-off reporting $\theta_v$, which contradicts the assumption that $\sigma$ is IC.

Therefore, under the optimal IC policy, the number of follower types that respond with $j = \fone$ (which is proportional to the leader's overall utility) is equal to the size of the maximum independent set in $G$. Any $\frac{1} {{|\Theta|}^{1-\epsilon}}$-approximation algorithm would also provide a $\frac{1} {{|V|}^{1-\epsilon}}$-approximation to the \textsc{Max-Independent-Set} problem.
\end{proof}

The bounds in Theorems \ref{thm:apx_optBay} and \ref{thm:apx_icBay} are essentially tight as suggested by a polynomial-time approximation algorithm in the next theorem with the matching approximation ratio.

\begin{theorem}
\label{thm:1_Theta_apx}
Both {\opt} and {\optic} admit a polynomial-time $\frac{1}{|\Theta|}$-approximation algorithm.
\end{theorem}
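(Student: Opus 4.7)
The plan is to present an enumeration-based algorithm that, for each candidate type $\theta^* \in \Theta$, computes (in polynomial time via the classical SSE LP) the Stackelberg outcome $\otc^* = \langle \mathbf{x}^*, j^* \rangle$ against a known type-$\theta^*$ follower, attaining leader utility $u_{\theta^*}$. From $\otc^*$ I would build a candidate policy $\sigma_{\theta^*}$ that uses $\otc^*$ for the menu entry $\otc^{\theta^*}$ and fills the remaining entries by a simple construction, then return $\arg\max_{\theta^*} U^\xl(\sigma_{\theta^*})$ over the $|\Theta|$ candidates.

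For {\optic}, I would adopt the \emph{flat single-strategy} fill-in: for every $\theta$, set $\otc^\theta := \langle \mathbf{x}^*, j^\theta \rangle$ with $j^\theta \in \BR_\theta(\mathbf{x}^*)$ chosen to maximize $u^\xl(\mathbf{x}^*, j^\theta)$. The first step is to verify that $\sigma_{\theta^*}$ is IC: since each prescribed $j^\theta$ is already a best response of type $\theta$ to the common strategy $\mathbf{x}^*$, truthful reporting gives $\theta$ its maximum follower payoff against $\mathbf{x}^*$, and ties are broken in the leader's favor by the standard convention. Then $U^\xl(\sigma_{\theta^*}) \ge \pi_{\theta^*} u_{\theta^*}$, because type $\theta^*$ alone contributes this much and other types contribute non-negatively. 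Because IC enforces truthful reporting, each entry satisfies $u^\xl(\otc^\theta) \le u_\theta$, so $\mathrm{OPT}_{\mathrm{IC}} \le \sum_\theta \pi_\theta u_\theta$, and a simple averaging argument gives $\max_{\theta^*} \pi_{\theta^*} u_{\theta^*} \ge \tfrac{1}{|\Theta|}\sum_\theta \pi_\theta u_\theta \ge \tfrac{1}{|\Theta|}\,\mathrm{OPT}_{\mathrm{IC}}$.

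For {\opt} the bound $\mathrm{OPT} \le \sum_\theta \pi_\theta u_\theta$ can fail because deceptive reporting can concentrate many true types' contributions on a single menu entry. Nevertheless, each reported type $\hat\theta$ still yields at most $u_{\hat\theta}$, so $\mathrm{OPT} \le \sum_{\hat\theta} p_{\hat\theta}\, u_{\hat\theta}$, where $p_{\hat\theta}$ is the probability mass of true types reporting $\hat\theta$ under the optimum; since $\sum_{\hat\theta} p_{\hat\theta} = 1$ over at most $|\Theta|$ nonzero terms, some $\hat\theta^{**}$ satisfies $p_{\hat\theta^{**}}\, u_{\hat\theta^{**}} \ge \mathrm{OPT}/|\Theta|$. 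To realize this bound, I would strengthen the fill-in so that each off-diagonal entry $\otc^\theta$ ($\theta \ne \theta^*$) is a \emph{minimax} outcome minimizing $u_\theta^\xf(\otc^\theta)$ subject to $j^\theta \in \BR_\theta(\mathbf{x}^\theta)$; this makes each such entry maximally unattractive to the type it targets, so any type $\theta$ whose follower payoff at $\otc^*$ exceeds its minimax threshold prefers reporting $\theta^*$. Taking $\theta^* = \hat\theta^{**}$ and comparing the incentive conditions in $\sigma_{\theta^*}$ with those of the optimum, the probability mass reporting $\theta^*$ under $\sigma_{\theta^*}$ dominates $p_{\hat\theta^{**}}$, yielding $U^\xl(\sigma_{\theta^*}) \ge p_{\hat\theta^{**}}\, u_{\hat\theta^{**}} \ge \mathrm{OPT}/|\Theta|$.

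The hard part will be the non-IC argument: carefully verifying that the minimax off-diagonal entries, chosen independently per type, do not themselves become attractive alternative reports for other types, so that the probability mass our policy redirects to $\theta^*$ truly dominates $p_{\hat\theta^{**}}$. The IC case is, by comparison, an immediate averaging argument on the clean upper bound $\sum_\theta \pi_\theta u_\theta$.
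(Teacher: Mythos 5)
Your {\optic} half is correct, and it is in fact a cleaner route than the paper's: the paper does not use the upper bound $\mathrm{OPT}_{\mathrm{IC}} \le \sum_\theta \pi_\theta u_\theta$ (SSE values against each type), but your flat single-strategy policy is IC, realizes $\pi_{\theta^*}u_{\theta^*}$, and the averaging argument closes the loop.

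The {\opt} half has a genuine gap: the claim that ``the probability mass reporting $\theta^*$ under $\sigma_{\theta^*}$ dominates $p_{\hat\theta^{**}}$'' is false in general. Two things go wrong. First, you replace the optimum's entry $\langle \mathbf{z}^{\hat\theta^{**}}, k^{\hat\theta^{**}}\rangle$ with the SSE outcome $\langle\mathbf{x}^*,j^*\rangle$ against $\theta^*$, which maximizes the \emph{leader's} utility and may be far less attractive to the other follower types than $\mathbf{z}^{\hat\theta^{**}}$ was; the optimum may have deliberately sacrificed per-unit leader utility at that entry to lure types into it. Second, your minimax fill-in cannot push a type $\theta$'s outside option below its minimax value $v_\theta = \min_{\mathbf{x}\in\Delta_m}\max_j u_\theta^\xf(\mathbf{x},j)$, since the induced action must lie in $\BR_\theta(\mathbf{x}^\theta)$. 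So whenever $u_\theta^\xf(\mathbf{x}^*,j^*) < v_\theta$ (entirely possible even though $u_\theta^\xf(\mathbf{z}^{\hat\theta^{**}},k^{\hat\theta^{**}}) \ge v_\theta$), type $\theta$ refuses to report $\theta^*$ no matter how you fill the other entries, the redirected mass falls short of $p_{\hat\theta^{**}}$, and the guarantee $U^\xl(\sigma_{\theta^*}) \ge p_{\hat\theta^{**}} u_{\hat\theta^{**}}$ collapses (the defecting types land on minimax entries that may give the leader $0$).

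The missing idea, which is what the paper does, is to change the benchmark from ``per \emph{reported} type'' to ``per \emph{true} type'': for each true type $\theta$, let $\mu_\theta^\xl$ be the maximum leader utility obtainable \emph{on type $\theta$} over all policies. This is computable in polynomial time by enumerating the report $\beta$ and the pair $(j^\theta,j^\beta)$ and solving an LP that \emph{jointly} optimizes the two strategies $\mathbf{x}^\theta$ (truthful entry) and $\mathbf{x}^\beta$ (target entry) subject to $u_\theta^\xf(\mathbf{x}^\beta,j^\beta)\ge u_\theta^\xf(\mathbf{x}^\theta,j^\theta)$, maximizing $u^\xl(\mathbf{x}^\beta,j^\beta)$; all remaining entries are set equal to $\mathbf{x}^\theta$ so that only the truthful report needs to be dominated. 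Because any feasible policy's restriction to the pair $(\theta,\hat\theta(\varsigma))$ is feasible for this LP, one gets $\mu_\theta^\xl \ge u^\xl(\mathbf{z}^{\hat\theta(\varsigma)},k^{\hat\theta(\varsigma)})$ for every policy $\varsigma$, hence $\sum_\theta \pi_\theta\mu_\theta^\xl \ge \mathrm{OPT}$, and $\max_\theta \pi_\theta\mu_\theta^\xl \ge \mathrm{OPT}/|\Theta|$ is actually achieved by the corresponding two-entry policy. Your construction fixes $\mathbf{x}^{\theta^*}$ in advance and tries to force participation afterwards; the joint optimization is what makes the participation constraint satisfiable.
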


\begin{proof}
We show an approximation algorithm for {\opt}. The algorithm for {\optic} follows a similar procedure.

For each follower type $\theta \in \Theta$, we compute a policy $\sigma_\theta$ that maximizes the actual utility the leader obtains on this specific type.
We show that (i) $\sigma_\theta$ can be computed in polynomial time, and (ii) the best $\sigma_\theta$ over all $\theta \in \Theta$ achieves at least $\frac{1}{|\Theta|}$ of the leader utility under the optimal policy.

\paragraph{(i) Polynomial-time computability.} To compute $\sigma_\theta$, the idea is to enumerate all possible reporting strategies $\beta\in \Theta$ of type $\theta$ and, for each $\beta$, compute the best policy \emph{under the constraint that it is indeed optimal for a type-$\theta$ follower to report $\beta$}.
An observation that helps simplifying the computation is that we only need to guarantee reporting $\beta$ to be (weakly) preferred by a type-$\theta$ follower to reporting $\theta$, i.e., the truthful report, so it suffices to consider only the two associated outcomes in the leader's policy, i.e., $o^\theta = \langle \mathbf{x}^\theta, j^\theta \rangle$ and $o^{\beta} = \langle \mathbf{x}^{\beta}, j^{\beta} \rangle$.
This is due to the fact that a policy that prescribes the same strategy $\mathbf{x}^\theta$ for all other reports $\beta' \notin \{\theta,\beta\}$ will trivially make the truthful report (weakly) preferred to all other reporting strategies $\beta' \notin \{\theta,\beta\}$: by reporting truthfully, the follower will already be induced to play his true best response to $\mathbf{x}^\theta$.
Now, the problem reduces to solving the following LP (liner program), in which $\mathbf{x}^\theta$ and $\mathbf{x}^{\beta}$ are the variables, for every pair of possible values of $j^\theta$ and $j^{\beta}$. Since $j^\theta, j^{\beta} \in [n]$, there are $n^2$ LPs to be solved; the polynomial-time computability follows immediately.
\begin{subequations}\label{eq:op1}
\begin{talign}
  \text{maximize} \quad &u^\xl(\mathbf{x}^{\beta}, j^{\beta}) \tag{\ref{eq:op1}} \\
  \text{subject to} \quad & %
  u_\theta^\xf (\mathbf{x}^\theta, j^\theta) \ge u_\theta^\xf (\mathbf{x}^\theta, j) && \text{for } j \in [n]  \label{eq:op1_a} \\
  &u_{\beta}^\xf (\mathbf{x}^{\beta}, j^{\beta}) \ge u_{\beta}^\xf (\mathbf{x}^{\beta}, j) && \text{for } j \in [n] \label{eq:op1_b}\\
  &u_\theta^\xf(\mathbf{x}^{\beta}, j^{\beta}) \ge u_\theta^\xf(\mathbf{x}^\theta, j^\theta) \label{eq:op1_c} \\
  & \mathbf{x}^{\theta}, \mathbf{x}^{\beta}\in \Delta_m \label{eq:op1_d}
\end{talign}
\end{subequations}
Namely, we maximize the leader's utility obtained on a type-$\theta$ follower under the condition that the follower is incentivized to report $\beta$.
Here, \eqref{eq:op1_a} and \eqref{eq:op1_b} guarantee that $j^\theta \in \BR_{\theta}(\mathbf{x})$ and $j^{\beta} \in \BR_{\beta}(\mathbf{x})$ as required in the definition of leader policy; and \eqref{eq:op1_c} ensures that reporting $\beta$ is indeed a better choice for a type-$\theta$ follower than reporting $\theta$.
The policy $\sigma_\theta$ is given by the LP with the best optimal solution among all the $n^2$ LPs.

\paragraph{(ii) Utility guarantee.} It remains to show that the best $\sigma_\theta$ over all $\theta \in \Theta$ offers the desired utility.
For each $\theta$, let $\mu_\theta^\xl$ be the utility offered by $\sigma_\theta$.
We have $U^\xl(\sigma_\theta) \ge \pi \cdot \mu_\theta^\xl$ because $\pi \cdot \mu_\theta^\xl$ is only the part of leader utility obtained on type $\theta$.
Now consider an arbitrary feasible policy $\varsigma = \left\{ \langle \mathbf{z}^\theta, k^\theta \rangle \right\}_{\theta\in\Theta}$.
Let $\hat{\theta}(\varsigma)$ denote a type-$\theta$ follower's best reporting strategy in response to policy $\varsigma$.
It holds that $\mu_\theta^\xl \ge u^\xl ( \mathbf{z}^{\hat{\theta}(\varsigma)}, k^{\hat{\theta}(\varsigma)} )$ because $\mathbf{x}^\theta = \mathbf{z}^\theta$ and $\mathbf{x}^\beta = \mathbf{z}^{\hat{\theta}(\varsigma)}$ are feasible under \eqref{eq:op1_a}--\eqref{eq:op1_d} when $j^\theta$ and $j^\beta$ are set to $k^\theta$ and $k^{\hat{\theta}(\varsigma)}$, respectively.
Let $\theta^* = \arg\max_{\theta\in\Theta} \pi_\theta \cdot \mu_\theta^\xl$. It follows that,  for any feasible $\sigma$,
\begin{align*}
U^\xl(\sigma_{\theta^*}) \geq \pi_{\theta^*} \cdot \mu_{\theta^*}^\xl \ge \frac{1}{|\Theta|} \sum_\theta \pi_\theta \cdot \mu_\theta^\xl 
\ge \frac{1}{|\Theta|} \sum_\theta \pi_{\theta} \cdot u^\xl ( \mathbf{x}^{\hat{\theta}(\varsigma)}, j^{\hat{\theta}(\varsigma)} ) = \frac{1}{|\Theta|} \cdot  U^\xl(\varsigma).
\end{align*}
Since the choice of $\varsigma$ is arbitrary, $\sigma_{\theta^*}$ serves as an $\frac{1}{|\Theta|}$-approximation of the optimal policy.
\end{proof}

\subsection{An MILP Formulation and the Tractability with Small $\Theta$}

\newcommand{\var}[1]{\textcolor{black}{#1}}

Given the hardness result, efficient algorithms for computing the optimal policy seems unlikely. We provide an algorithm based on an MILP as a practical approach to tackle the problem. The MILP formulation also forms the basis of a polynomial-time algorithm for {\opt} and {\optic} when the size of the follower type set $\Theta$ is fixed.

First, when IC constraints are  not imposed, the following program (not linear yet) computes the optimal leader policy, where $\mu_\theta^\xl$, $\mathbf{x}^\theta$, $y_{\beta}^\theta$, and $p_j^\theta$ are the variables ($\theta,\beta\in\Theta$, and $j\in[n]$).
\begin{subequations}\label{eq:milp}
\begin{talign}
\text{maximize} \quad &\sum_{\theta  } \pi_{\theta}\ \cdot \var{\mu_\theta^\xl }  \tag{\ref{eq:milp}} \\[0.5mm]
\text{subject to}\quad   %
& \var{\mu_\theta^\xl } \le  \sum_{j} p_j^{\beta} \cdot u^\xl ( \var{\mathbf{x}^{\beta}}, j) + (1 - \var{y_{\beta}^\theta}) \cdot M  && \text{for } \theta, \beta \in \Theta \label{eq:milp_a}\\[0.5mm]
&  p_j^\theta \cdot \left[ u_{\theta}^\xf ( \var{\mathbf{x}^{\theta}}, j) - u_{\theta}^\xf ( \var{\mathbf{x}^{\theta}} , k)  \right] \ge 0  & & \text{for } \theta \in \Theta;\, j,k \in [n] \label{eq:milp_b}\\[0.5mm]
& \sum_j p_j^{\beta} \cdot u_\theta^\xf ( \var{\mathbf{x}^{\beta}}, j ) %
- \sum_j p_j^{\gamma} \cdot u_\theta^\xf ( \var{\mathbf{x}^{\gamma}}, j ) \ge - (1 - \var{ y_{\beta}^\theta }) \cdot M & & \text{for } \theta, \beta, \gamma  \in \Theta \label{eq:milp_c} \\[0.5mm]
& \sum_i \var{x_i^\theta} = \sum_{ j } \var{p_{j}^{\theta}} =  \sum_{\beta} \var{y_{\beta}^{\theta}} = 1 & & \text{for } \theta \in \Theta  \\[0.5mm]
& \var{\mathbf{x}^{\theta}} \in [0,1]^m,\ \ \var{y_{\beta}^\theta} \in \{0,1\},\ \ \var{p_j^\theta} \in \{0,1\} & & \text{for } \theta, \beta \in \Theta ;\, j \in [n]
\end{talign}
\end{subequations}
Here, the leader policy is represented by variables $\mathbf{x}^{\theta}$ and $p_{j}^{\theta}$, where $\mathbf{x}^{\theta}$ is the leader mixed strategy prescribed for report $\theta$, and $p_{j}^{\theta} \in \{0,1\}$ captures the induced follower action.
Through \eqref{eq:milp_b}, it is guaranteed that $p_{j}^{\theta} = 1$ only if $j$ is a best response of type $\theta$ to $\mathbf{x}^{\theta}$, and $p_{j}^{\theta} = 0$, otherwise.
Similarly, $y_{\beta}^{\theta}$ captures the optimal reporting strategy of type $\theta$, i.e., $y_{\beta}^{\theta}=1$ only if reporting $\beta$ is optimal for $\theta$, and this is guaranteed by \eqref{eq:milp_c}, in which $M$ is a sufficiently large constant.
Finally, each $\mu_\theta^\xl$ captures the utility the leader obtains from true type $\theta$ by \eqref{eq:milp_a}.

This program is not yet an MILP because of the quadratic terms $p_j^{\theta} \cdot \mathbf{x}^\theta$.
To linearize the program, we replace these terms with a set of new variables $\tilde{x}_{ji}^\theta$, subject to $0 \le \tilde{x}_{ji}^\theta \le x_i^\theta$ (for all $\theta\in\Theta$, $j\in[n]$, and $i\in[m]$) and $\sum_i \tilde{x}_{ji}^\theta = p_j^\theta$ (for all $\theta\in\Theta$ and $i\in[m]$). This guarantees $\tilde{x}_{ji}^\theta = p_j^{\theta} \cdot {x}_i^\theta$ given that $p_j^{\theta}$ is restricted to be in $\{0,1\}$.

\smallskip

In the situation with IC constraints, we force $y_{\theta}^{\theta} = 1$ for all $\theta \in \Theta$ in the above MILP.
The ``windfall'' of the MILP formulation is Theorem~\ref{thm:opt_fixTheta}, the tractability of {\opt} and {\optic} with a small $\Theta$.

\begin{theorem}
\label{thm:opt_fixTheta}
{For a fixed number of follower types, both {\opt} and {\optic} can be solved in polynomial time.}
\end{theorem}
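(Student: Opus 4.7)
The plan is to exploit the MILP formulation derived above and enumerate over its integer variables. Observe that the only integer variables in the MILP are $p_j^\theta \in \{0,1\}$ and $y_\beta^\theta \in \{0,1\}$, and they are constrained by $\sum_j p_j^\theta = \sum_\beta y_\beta^\theta = 1$ for each $\theta$. Thus any feasible integer assignment corresponds exactly to choosing, for every type $\theta \in \Theta$, an induced follower action $j^\theta \in [n]$ together with an optimal reporting type $\beta^\theta \in \Theta$. The number of such joint assignments is $(n \cdot |\Theta|)^{|\Theta|}$, which is polynomial in the input size whenever $|\Theta|$ is a fixed constant.

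For each fixed tuple $(j^\theta, \beta^\theta)_{\theta \in \Theta}$, substitute the chosen integer values into the MILP. Constraint \eqref{eq:milp_b} reduces to a collection of linear inequalities $u_\theta^\xf(\mathbf{x}^\theta, j^\theta) \ge u_\theta^\xf(\mathbf{x}^\theta, k)$ in the continuous variables $\mathbf{x}^\theta$. Constraint \eqref{eq:milp_c} is trivially satisfied (as the big-$M$ slack is active) whenever $y_\beta^\theta = 0$, and for the unique $\beta = \beta^\theta$ with $y_{\beta^\theta}^\theta = 1$ it reduces to the linear IC-style inequality $u_\theta^\xf(\mathbf{x}^{\beta^\theta}, j^{\beta^\theta}) \ge u_\theta^\xf(\mathbf{x}^{\gamma}, j^{\gamma})$ for each $\gamma$. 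Similarly, \eqref{eq:milp_a} becomes $\mu_\theta^\xl \le u^\xl(\mathbf{x}^{\beta^\theta}, j^{\beta^\theta})$ together with redundant inequalities. In particular, once $p_j^\theta$ is fixed, the quadratic terms $p_j^\theta \cdot \mathbf{x}^\theta$ collapse to either $\mathbf{x}^\theta$ or $0$, so the linearization variables $\tilde{x}_{ji}^\theta$ are not needed at all. The residual problem is therefore a bona fide linear program in $(\mathbf{x}^\theta, \mu_\theta^\xl)_{\theta \in \Theta}$ and is solvable in polynomial time.

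The algorithm enumerates every $(j^\theta, \beta^\theta)_{\theta \in \Theta}$, solves the associated LP, and returns the policy attaining the largest objective. Total running time is $(n \cdot |\Theta|)^{|\Theta|}$ times the LP running time, which is polynomial when $|\Theta|$ is fixed. For \optic, the reporting tuple is pinned to $\beta^\theta = \theta$ for all $\theta$, so only the $n^{|\Theta|}$ choices of $(j^\theta)_{\theta \in \Theta}$ need to be enumerated, and the IC constraint is built into the LP by the very same inequalities. The only technical point to check is that the MILP constraints truly become linear after fixing the integer variables, which is routine given the structure identified above; there is no deeper obstacle.
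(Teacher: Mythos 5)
Your proposal is correct and follows essentially the same route as the paper: enumerate the polynomially many feasible integer assignments of $\mathbf{p}$ and $\mathbf{y}$ in Program~\eqref{eq:milp} (one induced action and one reported type per follower type), solve the resulting LP for each, and take the best. The extra details you supply --- the explicit count $(n\cdot|\Theta|)^{|\Theta|}$, the collapse of the quadratic terms once $p_j^\theta$ is fixed, and pinning $\beta^\theta=\theta$ for the IC variant --- are all consistent with the paper's argument.
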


\begin{proof}
Following Program~\eqref{eq:milp}, when $|\Theta|$ is a constant, the size of the joint space of feasible $\mathbf{y}$ and $\mathbf{p}$ is bounded by a polynomial in $n$.
We can enumerate every combination in this space and solve Program~\eqref{eq:milp} with $\mathbf{y}$ and $\mathbf{p}$ fixed to values in the combination; the program becomes a linear program as the remaining variables are continuous.
\end{proof}

\section{Generalization to Mixed Policy}
\label{sec:mixed_policy}

The leader policy we have considered so far consists of a single \emph{outcome} $\langle \mathbf{x}^{\theta}, j^\theta \rangle$ for each $\theta$. In this section, we extend the definition of leader policy, and allow a policy to prescribed a \emph{distribution} of outcomes for each follower type. We call an outcome distribution a \emph{mixture} and refer to a leader policy consisting of mixtures a \emph{mixed policy}, as opposed to \emph{pure} policies in the previous sections.

Now the game proceeds as follows. The leader first commits to a mixed policy. The follower observes the mixed policy and reports a type $\theta$. The leader then samples an outcome $\langle \mathbf{x}, j \rangle$ from the mixture prescribed for $\theta$, plays $\mathbf{x}$, and induces the follower to respond with action $j$.

Obviously, the leader utility of an optimal mixed policy is at least that of a pure one as all pure policies are also mixed policies by definition. An immediate question is why further randomization over the outcomes will benefit the leader since the leader is already playing a mixed strategy in every outcome, and randomizing mixed strategies normally will not bring any extra power for a player. It turns out that this extra randomization will be beneficial for the leader when there is follower deception. Intuitively, such randomization can help us tame the incentive constraints for the follower's type reporting, while the randomization in mixed strategies is responsible for inducing desired follower action responses.  The following  example provides a more concrete illustration.

\renewcommand{\fone}{\mathsf{1^x}}
\renewcommand{\ftwo}{\mathsf{2^x}}
\renewcommand{\fthr}{\mathsf{3^x}}

\paragraph{Example: Strict Improvement of Mixed Policies}
We consider again a security game example with two targets: areas $1$ and $2$.
The defender can patrol one of these areas and the poacher chooses one to attack.
The poacher has three types $\theta_*$, $\theta_A$ and $\theta_B$ which appear with the equal probability $1/3$. The payoffs matrices are given in the tables below.

\begin{figure}[h!]
\renewcommand{\arraystretch}{1.3}
\newcolumntype{R}{>{\rule[-0.5em]{0pt}{1.5em}$}r<{$}}
\newcolumntype{L}{>{\rule[-0.5em]{0pt}{1.5em}\raggedleft\arraybackslash$}p{5mm}<{$}}
\newcolumntype{C}{>{\rule[-0.5em]{0pt}{1.5em}\centering\arraybackslash$}p{5mm}<{$}}
\centering
\hspace{-2mm}
\begin{tabular}{R|L|L|}
\multicolumn{1}{R}{}&	\multicolumn{1}{C}{\fone}	&  \multicolumn{1}{C}{\ftwo}  \\
\cline{2-3}
 \lone	&	1	& -1	 \\\cline{2-3}
 \ltwo	&	-1	& 1	 \\\cline{2-3}
\multicolumn{1}{R}{}& \multicolumn{2}{C}{\rule[-0.2em]{0pt}{2em} \emph{defender}} \\
\end{tabular}
\hspace{8mm}
\begin{tabular}{R|L|L|}
\multicolumn{1}{R}{}&	\multicolumn{1}{C}{\fone}	&  \multicolumn{1}{C}{\ftwo}  \\
\cline{2-3}
 \lone	&	-1	& 1  \\\cline{2-3}
 \ltwo	&	1	& -1	 \\\cline{2-3}
\multicolumn{1}{R}{}& \multicolumn{2}{C}{\rule[-0.2em]{0pt}{2em} \hspace{-4mm}\emph{poacher type $\theta_*$}} \\
\end{tabular}
\hspace{8mm}
\begin{tabular}{R|L|L|}
\multicolumn{1}{R}{}&	\multicolumn{1}{C}{\fone}	&  \multicolumn{1}{C}{\ftwo}  \\\cline{2-3}
 \lone	&	0	& 0	 \\\cline{2-3}
 \ltwo	&	1	& -2	 \\\cline{2-3}
\multicolumn{1}{R}{}& \multicolumn{2}{C}{\rule[-0.2em]{0pt}{2em} \hspace{-4mm}\emph{poacher type $\theta_A$}} \\
\end{tabular}
\hspace{8mm}
\begin{tabular}{R|L|L|}
\multicolumn{1}{R}{}&	\multicolumn{1}{C}{\fone}	&  \multicolumn{1}{C}{\ftwo}  \\\cline{2-3}
 \lone	&	-2	& 1	 \\\cline{2-3}
 \ltwo	&	0	& 0	 \\\cline{2-3}
\multicolumn{1}{R}{}& \multicolumn{2}{C}{\rule[-0.2em]{0pt}{2em} \hspace{-4mm}\emph{poacher type $\theta_B$}} \\
\end{tabular}
\end{figure}

Note that the two areas are of equal value to both the defender and poacher type $\theta_*$.
Thus, $\theta_*$ will always attack the less frequently patrolled area, so for an outcome $\langle \mathbf{x}, j\rangle$ prescribed for report $\theta_*$, it always holds that $x_j \le 1/2$. Suppose without loss of generality that $j=\fone$. Then type $\theta_A$ is able to obtain at least $1/2$ by reporting $\theta_*$, in which case the actual utility the defender obtains on poacher type $\theta_A$, using a pure policy, cannot be more than $0$, because for a type-$A$ poacher to obtain utility at least $1/2$, the bottom left entries of the payoff matrices need to be chosen with probability at least $1/2$.
The same argument applies to type $\theta_B$ if we suppose $j=\ftwo$.
As a result, any pure policy can obtain utility more than $0$ on at most one of $\theta_A$ and $\theta_B$.
Further, observe that the true payoff parameters of type $\theta_*$ make the game zero-sum. Thus, the actual utility the defender obtains on poacher type $\theta_*$ is at most $0$ because a type-$\theta_*$ poacher is guaranteed utility $0$ by reporting truthfully, and when playing against a type-$\theta_*$ poacher, the defender always gets the negative of the poacher's actual utility.
As a result, a pure policy obtains positive utility on at most one poacher type; her overall utility is at most $1/3$.

Now we consider a {mixed policy} as follows:
\begin{itemize}
  \item
If type $\theta_*$ is reported, choose outcomes $\langle (1/2, 1/2), \fone \rangle$ and $\langle (1/2,1/2), \ftwo \rangle$, each with probability $1/2$;

  \item
If type $\theta_A$ is reported, choose outcome $\langle (1,0), \fone \rangle$ with probability $1$; and if $\theta_B$, choose $\langle (0,1), \ftwo \rangle$ with probability $1$.
\end{itemize}
As such, types $\theta_A$ and $\theta_B$ can only obtain $-1/2$ in expectation if they report $\theta_*$. This incentivizes both of them to report truthfully, yielding expected defender utility $2/3$ --- an improvement of at least $1/3$ compared to the optimal pure policy.

\newcommand{\mixtr}{\phi}

\subsection{Mixtures with Support  Size $n$ Suffice}

To compute the optimal mixed policy, one challenge is that a mixture may be supported on an infinite set since there are infinitely many mixed strategies and hence, as many outcomes.
Fortunately, using a revelation-principle-type argument, we prove that it suffices to consider mixtures supported on at most $n$ outcomes; each outcome induces the follower to choose a distinct action in $[n]$. This is shown in Proposition~\ref{prp:supportsize} below.

\begin{proposition}
\label{prp:supportsize}
For any feasible mixture $\mixtr$ prescribed for a reported type $\theta \in \Theta$, there is a feasible mixture with support size $\tilde{n} \le n$ that yields the same utility as does $\mixtr$, for both the leader and the follower.
\end{proposition}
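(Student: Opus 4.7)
The plan is to apply a grouping-and-averaging argument: for a fixed reported type $\theta$, partition the support of $\mixtr$ according to which follower action $j\in[n]$ is induced, then collapse each part into a single outcome that uses the conditional expected mixed strategy. Formally, for each $j\in[n]$ let $q_j$ denote the total probability mass that $\mixtr$ places on outcomes of the form $\langle \mathbf{x}, j\rangle$, and (for $q_j>0$) let $\bar{\mathbf{x}}_j$ be the conditional expectation of the leader strategy $\mathbf{x}$ given that the induced action is $j$. Define the new mixture $\tilde{\mixtr}$ that places probability $q_j$ on the outcome $\langle \bar{\mathbf{x}}_j, j\rangle$ for every $j$ with $q_j>0$; its support size is clearly at most $n$.

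The first step is to check feasibility, i.e., that $j\in\BR_\theta(\bar{\mathbf{x}}_j)$ whenever $q_j>0$. Every outcome $\langle \mathbf{x}, j\rangle$ in the corresponding piece of the support of $\mixtr$ satisfies $u_\theta^\xf(\mathbf{x}, j)\ge u_\theta^\xf(\mathbf{x}, k)$ for all $k\in[n]$. Because $u_\theta^\xf(\,\cdot\,, k)$ is linear in the leader strategy, taking the conditional expectation of both sides over the piece preserves the inequality, giving $u_\theta^\xf(\bar{\mathbf{x}}_j, j)\ge u_\theta^\xf(\bar{\mathbf{x}}_j, k)$. Hence $\langle \bar{\mathbf{x}}_j, j\rangle$ is a valid outcome, and $\tilde{\mixtr}$ is a feasible mixture for the report $\theta$.

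The second step is to verify that $\tilde{\mixtr}$ yields exactly the same expected utilities as $\mixtr$. For the leader, the contribution of the $j$-th piece under $\mixtr$ is $\int u^\xl(\mathbf{x}, j)\, d\mixtr(\mathbf{x}, j)$, which, again by linearity of $u^\xl(\,\cdot\,, j)$ in $\mathbf{x}$, equals $q_j \cdot u^\xl(\bar{\mathbf{x}}_j, j)$ --- precisely its contribution under $\tilde{\mixtr}$. The identical linearity argument applies to $u_{\theta'}^\xf$ for every follower type $\theta'\in\Theta$ (including $\theta$ itself), so the expected utility any type $\theta'$ would obtain from reporting $\theta$ is unchanged. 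In particular, replacing $\mixtr$ by $\tilde{\mixtr}$ within any larger mixed policy preserves every follower type's optimal reporting strategy and every player's payoffs.

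I expect no serious obstacle here: the heart of the argument is just the observation that $u^\xl$, $u_\theta^\xf$, and the best-response constraints are all linear in the leader's mixed strategy, so conditional averaging is ``free.'' The only care needed is bookkeeping when the support of $\mixtr$ is uncountable, where the sums above should be read as integrals with respect to the measure $\mixtr$; the linearity manipulations are unaffected.
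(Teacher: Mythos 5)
Your proposal is correct and follows essentially the same argument as the paper: group outcomes by induced follower action, replace each group with its probability-weighted average leader strategy, and use linearity of the utilities (equivalently, convexity of the set of strategies inducing a given best response) to preserve feasibility and all players' expected payoffs. The only cosmetic difference is that the paper phrases feasibility via convexity of the best-response region while you take conditional expectations of the defining inequalities directly; these are the same observation.
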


\begin{proof}
We will show that for any given mixture $\mixtr$ prescribed for report $\theta$ we can merge outcomes that induce the same follower response into one outcome, and this will not change both players' expected utility.

Formally, let $\mathcal{O}$ be the support set of $\mixtr$; let $p_{\langle \mathbf{x}, j \rangle} > 0$ be the probability assigned to outcome $\langle \mathbf{x}, j \rangle \in \mathcal{O}$.
For each $j\in[n]$, let $\mathcal{O}_j = \{ \langle \mathbf{x}, j' \rangle \in \mathcal{O} : j' = j \}$ be the set of outcomes that induce follower action $j$.
Let  $\tilde{p}_j = \sum_{\langle \mathbf{x}, j \rangle \in \mathcal{O}_j } p_{\langle \mathbf{x}, j \rangle } $ and
$\tilde{\mathbf{x}}_j = \frac{1}{\tilde{p}_j} \sum_{\langle \mathbf{x}, j \rangle \in \mathcal{O}_j } p_{\langle \mathbf{x}, j \rangle } \cdot \mathbf{x}$.
Let $\tilde{\mixtr}$ be a mixture supported on $\{ \langle \tilde{\mathbf{x}}_1, 1  \rangle, \dots, \langle \tilde{\mathbf{x}}_n, n  \rangle \}$, with probability $\tilde{p}_j$ for each $\langle \tilde{\mathbf{x}}_j, j  \rangle$.
This new mixture will not change the leader's utility when $\theta$ is reported because of the following:
\begin{eqnarray*}
\sum_{\langle \mathbf{x}, j \rangle \in \mathcal{O} } p_{\langle \mathbf{x}, j \rangle } \cdot u^\xl ( \mathbf{x} , j )
&=& \sum_{j \in [n]} \sum_{\langle \mathbf{x}, j \rangle \in \mathcal{O}_j } p_{\langle \mathbf{x}, j \rangle } \cdot u^\xl ( \mathbf{x} , j ) \\
%
%
&=& \sum_{j \in [n]} \tilde{p}_j  \cdot \ u^\xl \Big(\ \frac{1}{\tilde{p}_j} \sum_{\langle \mathbf{x}, j \rangle \in \mathcal{O}_j } p_{\langle \mathbf{x}, j \rangle } \cdot \mathbf{x}\ ,\ \ j \ \Big) \\
&=& \sum_{j \in [n]} \tilde{p}_j  \cdot u^\xl ( \tilde{\mathbf{x}}_j ,\ j ).
\end{eqnarray*}
By changing labels of the utility function, we can obtain $\sum_{\langle \mathbf{x}, j \rangle \in \mathcal{O} } p_{\langle \mathbf{x}, j \rangle } \cdot u_\beta^\xf ( \mathbf{x} , j ) = \sum_{j \in [n]} \tilde{p}_j  \cdot u_\beta^\xf ( \tilde{\mathbf{x}}_j ,\ j )$ for any follower type $\beta \in \Theta$, so $\tilde{\mixtr}$ will not change the follower's utility, either, irrespective of his true type.

It remains to show that $\tilde{\mixtr}$ is indeed a valid mixture.  Clearly, $\tilde{p}_j \ge 0$ and $\sum_{j} \tilde{p}_j = \sum_{j} \sum_{\langle \mathbf{x}, j \rangle \in \mathcal{O}_j } p_{\langle \mathbf{x}, j \rangle } = \sum_{\langle \mathbf{x}, j \rangle \in \mathcal{O} } p_{\langle \mathbf{x}, j \rangle } = 1$, so $\tilde{p}_j$ is a valid probability distribution.
In addition, we need each  outcome $\langle \tilde{\mathbf{x}}_j, j  \rangle$ to be feasible, i.e., $j \in \BR_{\theta}(\tilde{\mathbf{x}}_j)$, so that $j$ can indeed be induced by $\tilde{\mathbf{x}}_j$. This is true because the set $\mathcal{X}_j = \{\mathbf{x} \in \Delta_m: j \in \BR_{\theta}(\mathbf{x}) \} = \{\mathbf{x} \in \Delta_m: u_{\theta}^\xf (\mathbf{x}, j) \ge u_{\theta}^\xf (\mathbf{x}, j') \text{ for all } j' \in [n] \} $ is a convex set, and by definition, all $\mathbf{x}$ such that $\langle \mathbf{x}, j \rangle \in \mathcal{O}_j$ are in $\mathcal{X}_j$. Thus,  $\tilde{\mathbf{x}}$, as a convex combination of $\mathbf{x} \in \mathcal{X}_j$, also lies in $\mathcal{X}_j$, so $j \in \BR_{\theta}(\tilde{\mathbf{x}}_j)$. Therefore, $\langle \tilde{\mathbf{x}}_j, j \rangle$ is a feasible outcome and $\tilde{\mixtr}$ a valid mixture. This completes the proof.
\end{proof}

Following this result, we can represent the mixture for each type $\theta$ as a vector $(p_1^\theta, \dots, p_n^\theta) \in \Delta_n$, along with a support set $\{\mathbf{x}_1^\theta,\dots,\mathbf{x}_n^\theta\}$ of $n$ mixed strategies; the mixture samples each outcome $\langle \mathbf{x}_j^\theta, j \rangle$ with probability $p_j^\theta$.
Note that this representation allows outcomes involved to be invalid because it is possible that some follower action $j$ cannot be induced by any leader strategy, i.e., $j \notin \BR_{\theta}(\mathbf{x})$ for all $\mathbf{x}\in \Delta_m$.
Thus, a mixture is valid only if $j \in \BR_{\theta}(\mathbf{x}_j^\theta)$ for all $j$ such that $p_j^\theta > 0$.

\subsection{Computing the Optimal Mixed Policy}

Proposition \ref{prp:supportsize} implies that there exists an optimal mixed policy of polynomial size. Nevertheless, out next theorem, Theorem~\ref{thm:apx_optBay_mix}, shows that the problem of computing the optimal mixed policy, referred to as {\optx}, remains inapproximable in the situation \emph{without} IC constraints.
Interestingly, after we impose IC constraints, the problem {\optxic} becomes \emph{tractable}. We present these results next.

\begin{theorem}
  \label{thm:apx_optBay_mix}
 For any constant $\epsilon > 0$, {\optx} does not admit any polynomial-time $\frac{1} {{(|\Theta|-1)}^{1-\epsilon}}$-approximate algorithm unless P = NP, even when the follower has only three actions.
\end{theorem}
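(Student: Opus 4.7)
The plan is to reuse, essentially verbatim, the reduction from \textsc{Max-Independent-Set} used in the proof of Theorem~\ref{thm:apx_optBay}: the same $|V|+1$ follower types $\{\theta_*\}\cup\{\theta_v:v\in V\}$, the same prior, and the same payoff matrices (Figure~\ref{fig:thm:apx_optBay}). Since every pure policy is also a mixed policy, the forward direction---an independent set of size $k$ yields a mixed policy of value $\tfrac{k}{|\Theta|-1}$---is inherited immediately from the construction already in Theorem~\ref{thm:apx_optBay}. The remaining task is to argue that, despite the added expressive power of randomizing outcomes, the converse still holds: under any mixed policy $\sigma$, the set $\{v\in V:\hat{\theta}_v(\sigma)=\theta_*\}$ is still independent in $G$.

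The first step is to verify that the three structural facts used in the pure-policy proof carry over to each outcome in the support of every mixture. By Proposition~\ref{prp:supportsize}, each prescribed mixture $\mixtr^{\theta}$ may be assumed to have support size at most $n=3$, with one outcome per follower action. Since $\fone$ is strictly dominant for $\theta_*$, every outcome of $\mixtr^{\theta_*}$ induces $\fone$ and gives leader utility $1$. Since $\fone$ is strictly dominated by $\ftwo$ for each $\theta_v$, no outcome in any $\mixtr^{\theta_v}$ can induce $\fone$, so the leader collects $0$ from true type $\theta_v$ whenever $\theta_v$ does not report $\theta_*$. Consequently, the overall leader utility equals $\tfrac{1}{|\Theta|-1}\cdot |\{v:\hat{\theta}_v(\sigma)=\theta_*\}|$, exactly as in the pure case.

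The hard part will be lifting the row-by-row argument of Theorem~\ref{thm:apx_optBay}, which pinned down the unique leader strategy $\mathbf{x}^{\theta_{v'}}$, to an outcome-by-outcome argument across the entire support of $\mixtr^{\theta_{v'}}$. Suppose for contradiction that $\theta_v$ and $\theta_{v'}$ both optimally report $\theta_*$ with $(v,v')\in E$. Because every outcome of every $\mixtr^{\theta_{v''}}$ has follower response in $\{\ftwo,\fthr\}$, and both actions yield at least $0.5$ on every row of each $\theta_v$-matrix, truthful reporting already guarantees expected utility at least $0.5$; so reporting $\theta_*$ must also deliver at least $0.5$, which forces $\mixtr^{\theta_*}$ to concentrate all leader-strategy mass on row $a_0$ and to yield each of $\theta_v,\theta_{v'}$ exactly $0.5$. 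For $\theta_v$ not to deviate to $\theta_{v'}$, the total mass that $\mixtr^{\theta_{v'}}$ places on row $a_{v'}$ must be zero (row $a_{v'}$ contributes $1$ to $\theta_v$ under both $\ftwo$ and $\fthr$, while every other row contributes at least $0.5$, so any positive mass on $a_{v'}$ would push $\theta_v$'s utility strictly above $0.5$). For $\theta_{v'}$ not to deviate to truthful, every outcome $\langle\mathbf{x}_o,j_o\rangle$ of $\mixtr^{\theta_{v'}}$ must give $\theta_{v'}$ exactly $0.5$; combined with the exclusion of $a_{v'}$, this forces $j_o=\ftwo$ with $\mathbf{x}_o$ supported on the ``otherwise'' rows---but then $\fthr$ strictly dominates $\ftwo$ on $\mathbf{x}_o$, violating $j_o\in\BR_{\theta_{v'}}(\mathbf{x}_o)$. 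This contradiction closes the correspondence with \textsc{Max-Independent-Set}, transferring its $\tfrac{1}{|V|^{1-\epsilon}}$-hardness into the claimed $\tfrac{1}{(|\Theta|-1)^{1-\epsilon}}$ lower bound for {\optx}.
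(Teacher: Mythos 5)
Your proposal is correct and follows essentially the same route as the paper's proof: the identical reduction from \textsc{Max-Independent-Set}, the same structural observations (all outcomes for $\theta_*$ induce $\fone$, none for $\theta_v$ do, hence leader utility counts the types reporting $\theta_*$), and the same contradiction showing two adjacent types cannot both report $\theta_*$. The only cosmetic difference is that you organize the endgame by forcing every outcome of $\mixtr^{\theta_{v'}}$ to give exactly $0.5$ and then ruling out $\fthr$- and $\ftwo$-outcomes in turn, whereas the paper cases on which of $p_{\ftwo},p_{\fthr}$ is positive --- the underlying argument is the same.
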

\begin{proof}
We show a reduction from the same \textsc{Max-Independent-Set} problem as in the proof of Theorem~\ref{thm:apx_optBay}, but we will need to adapt the correctness proof to {mixed} policies here.

For one direction of the reduction, if there is an independent set $V'$ of size $k$ in $G$, the same policy we constructed in the proof of Theorem~\ref{thm:apx_optBay} achieves overall leader utility $\frac{k}{|\Theta|-1}$ for the same argument.

For the other direction, observe that it holds still with mixed policies that the leader gets $1$ when $\theta_*$ is reported, and $0$, otherwise; hence, the leader's overall utility is proportional to the number of types $\theta_v$, $v\in V$, who report $\theta_*$.
We show that, for any mixed policy $\sigma$,  $\{ v \in V: \hat{\theta}_v (\sigma) = \theta_*\}$ is an independent set to conclude the proof.
Suppose for a contradiction that $\hat{\theta}_v (\sigma) = \theta_*$ and $\hat{\theta}_{v'} (\sigma) = \theta_*$ but $(v,v')\in E$.
We make the following observations:
  \begin{itemize}
    \item[(i)] $p_{\fone}^{\theta_{v}} = 0$ because $\fone$ is strictly dominated by $\ftwo$ for a follower of type $\theta_{v}$ and hence $\fone \notin \BR_{\theta_{v}}(\mathbf{x})$ for any leader strategy $\mathbf{x}$. It follows that we need $p_{\ftwo}^{\theta_{v}} > 0$ or $p_{\fthr}^{\theta_{v}} > 0$, as $(p_{\fone}^{\theta_{v}}, p_{\ftwo}^{\theta_{v}}, p_{\fthr}^{\theta_{v}}) \in \Delta_3$.

    \item[(ii)] Since a follower of type $\theta_{v'}$ gets utility $0.5$ by reporting $\theta_*$, we need $p_{\ftwo}^{\theta_{v}} \cdot x_{\ftwo,a_{v}}^{\theta_{v}} = 0$ and $p_{\fthr}^{\theta_{v}} \cdot x_{\fthr,a_{v}}^{\theta_{v}} = 0$ as otherwise the action profile  $\langle a_{v}, \ftwo \rangle$ would be chosen with some probability and $\theta_{v'}$ would be better-off reporting $\theta_{v}$ to get more than $0.5$.
  \end{itemize}
  Now, consider the two possible cases implied by (i).
  \begin{itemize}
    \item If $p_{\fthr}^{\theta_{v}} > 0$, we then have $x_{\fthr,a_{v}}^{\theta_{v}} = 0$ by (ii), which means some action profile $\langle i, \fthr \rangle$, $i\neq a_{v}$, would be chosen with some probability, in which case a follower of type $\theta_{v}$ would be better-off reporting truthfully instead of reporting $\theta_*$ --- a contradiction.

    \item If $p_{\ftwo}^{\theta_{v}} > 0$, we have $x_{\ftwo,a_{v}}^{\theta_{v}} = 0$ by (ii). Now in order for a type-$\theta_{v}$ follower to still find reporting $\theta_*$ optimal, we need $x_{\ftwo, i}^{\theta_{v}} = 0$ for all $i\in \{a_0, b_{v}\} \cup \mathcal{N}({v})$ (correspondingly, the first, third and fourth rows of the payoff matrix) as otherwise truthful report would gain type $\theta_{v}$ a higher utility; consequently, only the fifth row of the payoff matrix will be chosen by $\mathbf{x}_{\ftwo}^{\theta_{v}}$, in which case, however, $\BR_{\theta_{v}} (\mathbf{x}_{\ftwo}^{\theta_{v}}) = \{\fthr\}$, so $\langle \mathbf{x}_{\ftwo}^{\theta_{v}}, \ftwo \rangle$ cannot be a valid outcome --- a contradiction to the assumption that $p_{\ftwo}^{\theta_{v}} > 0$.
  \end{itemize}
  This completes the proof.
\end{proof}

The bound in the above inapproximability result is tight following a similar argument for {\opt}.
Moreover, an MILP formulation for {\optx} can be obtained by modifying Program~\eqref{eq:milp}, and this yields a polynomial-time algorithm for {\optx} when the number of follower types is fixed.
If we further impose IC constraints, the constraints remove the only integer variables $y_{\beta}^\theta$ in the MILP (variables $p_j^\theta$ are relaxed now) and result in a linear program; this establishes the tractability of {\optxic}.
We sketch the results in Theorems~\ref{thm:optx_apx}--\ref{thm:optBay_mix_ic} below.

\begin{theorem}
\label{thm:optx_apx}
  {\optx} admits a polynomial-time  $\frac{1}{|\Theta|}$-approximation algorithm.
\end{theorem}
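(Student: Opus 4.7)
The plan is to mimic the approach in Theorem~\ref{thm:1_Theta_apx}, with the key adaptation that the new representation of mixtures (Proposition~\ref{prp:supportsize}) lets us avoid the enumeration over pure best responses $j^\theta, j^\beta \in [n]$ that was needed for pure policies. Specifically, for each true type $\theta \in \Theta$, I will compute a mixed policy $\sigma_\theta$ maximizing the leader's utility $\mu_\theta^\xl$ earned on type $\theta$, then return the $\sigma_\theta$ maximizing $\pi_\theta \cdot \mu_\theta^\xl$. The same averaging argument as in Theorem~\ref{thm:1_Theta_apx} will then yield the $\frac{1}{|\Theta|}$ guarantee.

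To compute $\sigma_\theta$, enumerate the candidate report $\beta \in \Theta$ that type $\theta$ submits, and set $\mixtr^{\beta'} = \mixtr^\theta$ for every other report $\beta' \notin \{\theta, \beta\}$; by the same argument as in \eqref{eq:op1}, this renders those other reports weakly dominated by truthful reporting, so it suffices to enforce incentive compatibility only between $\beta$ and $\theta$. For the pair $(\theta, \beta)$, I introduce variables $\tilde{x}_{ji}^\gamma \ge 0$ for $\gamma \in \{\theta,\beta\}$, $j\in[n]$, $i\in[m]$, representing $p_j^\gamma \cdot x_{ji}^\gamma$ in the polynomial-sized representation of Proposition~\ref{prp:supportsize}. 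The resulting program is linear: the objective $\sum_{i,j} \tilde{x}_{ji}^{\beta} \cdot u^\xl(i,j)$, the normalization $\sum_{i,j} \tilde{x}_{ji}^\gamma = 1$, the per-action best-response validity $\sum_i \tilde{x}_{ji}^\gamma \cdot \bigl( u_\gamma^\xf(i,j) - u_\gamma^\xf(i,k) \bigr) \ge 0$ for all $j,k \in [n]$, and the cross-type incentive constraint $\sum_{i,j} \tilde{x}_{ji}^{\beta} u_\theta^\xf(i,j) \ge \sum_{i,j} \tilde{x}_{ji}^{\theta} u_\theta^\xf(i,j)$ are all linear in $\tilde{x}$. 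Solving one LP per $\beta \in \Theta$ gives $|\Theta|^2$ LPs in total, all solvable in polynomial time.

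For the utility guarantee, given any feasible mixed policy $\varsigma$, plugging the mixtures $\mixtr^\theta$ and $\mixtr^{\hat{\theta}(\varsigma)}$ of $\varsigma$ into the LP for pair $(\theta, \hat{\theta}(\varsigma))$ produces a feasible solution with objective equal to $\varsigma$'s leader utility on type $\theta$. Hence $\mu_\theta^\xl \ge u^\xl(\mixtr^{\hat{\theta}(\varsigma)})$ for every $\theta$, and choosing $\theta^* = \arg\max_\theta \pi_\theta \cdot \mu_\theta^\xl$ yields
$U^\xl(\sigma_{\theta^*}) \ge \pi_{\theta^*} \mu_{\theta^*}^\xl \ge \frac{1}{|\Theta|} \sum_\theta \pi_\theta \mu_\theta^\xl \ge \frac{1}{|\Theta|} U^\xl(\varsigma).$

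The main point to verify, which I expect to be the only subtlety, is that the $\tilde{x}$ substitution faithfully captures valid mixtures: given an LP solution, setting $p_j^\gamma := \sum_i \tilde{x}_{ji}^\gamma$ and, for $p_j^\gamma>0$, $\mathbf{x}_j^\gamma := \tilde{x}_{j,\cdot}^\gamma / p_j^\gamma$, the best-response constraint (after dividing by $p_j^\gamma>0$) gives exactly $j \in \BR_\gamma(\mathbf{x}_j^\gamma)$; when $p_j^\gamma = 0$ the outcome is unused and its strategy component can be set arbitrarily. Thus any LP feasible solution corresponds to a valid mixture and vice versa, closing the argument.
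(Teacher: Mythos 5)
Your proposal is correct and follows essentially the same route as the paper, whose proof is only a one-line sketch deferring to the argument of Theorem~\ref{thm:1_Theta_apx}; you supply the missing details (the pairwise reduction over reports $\beta$, the linearized LP in the variables $\tilde{x}_{ji}^\gamma = p_j^\gamma x_{ji}^\gamma$, and the averaging bound) correctly. One small imprecision: you cannot literally set $\mixtr^{\beta'} = \mixtr^{\theta}$ for $\beta' \notin \{\theta,\beta\}$, since each outcome prescribed for report $\beta'$ must induce an action in $\BR_{\beta'}$ of the sampled strategy; the fix is to reuse the strategies and probabilities of $\mixtr^{\theta}$ but reassign each induced action to a best response of type $\beta'$, after which truthful reporting remains weakly preferred, exactly as in the pure-policy argument around \eqref{eq:op1}.
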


\begin{proof}[Proof Sketch]
  The proof follows from a similar argument for Theorem~\ref{thm:1_Theta_apx}.
\end{proof}

\begin{theorem}
\label{thm:optx_fixTheta}
  {For a fixed number of follower types, {\optx} can be solved in polynomial time.}
\end{theorem}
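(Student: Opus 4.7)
The plan is to derive the result by adapting the MILP of Program~\eqref{eq:milp} to the mixed-policy setting, using Proposition~\ref{prp:supportsize} to keep the representation polynomial. By that proposition, for each $\theta\in\Theta$ we may represent the prescribed mixture by a probability vector $(p_1^\theta,\ldots,p_n^\theta)\in\Delta_n$ together with $n$ supporting mixed strategies $\mathbf{x}_1^\theta,\ldots,\mathbf{x}_n^\theta$, where outcome $\langle \mathbf{x}_j^\theta, j\rangle$ is drawn with probability $p_j^\theta$. The reporting indicators $y_\beta^\theta\in\{0,1\}$ remain as in Program~\eqref{eq:milp}, but $p_j^\theta$ is now a continuous variable in $[0,1]$.

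First I would eliminate the bilinear terms $p_j^\theta\cdot x_{j,i}^\theta$ by introducing, for every $\theta\in\Theta$, $j\in[n]$, $i\in[m]$, a single variable $\tilde{x}_{j,i}^\theta := p_j^\theta\cdot x_{j,i}^\theta$, constrained only by $\tilde{x}_{j,i}^\theta\ge 0$ and $\sum_i \tilde{x}_{j,i}^\theta = p_j^\theta$. This change of variables is \emph{exact}: given any nonnegative $\tilde{\mathbf{x}}$ satisfying these marginal constraints one recovers a valid $\mathbf{x}_j^\theta$ via $x_{j,i}^\theta = \tilde{x}_{j,i}^\theta / p_j^\theta$ whenever $p_j^\theta>0$, while if $p_j^\theta=0$ the contribution of the $j$-th outcome is zero and the particular $\mathbf{x}_j^\theta$ is irrelevant. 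With this substitution, the leader's and followers' expected utilities from a reported type $\beta$ become the linear expressions $\sum_{j,i}\tilde{x}_{j,i}^\beta\cdot u^\xl(i,j)$ and $\sum_{j,i}\tilde{x}_{j,i}^\beta\cdot u_\theta^\xf(i,j)$. The best-response requirement $j\in\BR_\theta(\mathbf{x}_j^\theta)$ (needed only when $p_j^\theta>0$) is enforced by the linear inequality $\sum_i \tilde{x}_{j,i}^\theta\cdot u_\theta^\xf(i,j)\ge \sum_i \tilde{x}_{j,i}^\theta\cdot u_\theta^\xf(i,k)$ for all $k\in[n]$, which holds trivially when $p_j^\theta=0$. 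The big-$M$ constraints \eqref{eq:milp_a} and \eqref{eq:milp_c} carry over verbatim using the new linear utilities, and $\mu_\theta^\xl$ and $y_\beta^\theta$ retain their original meaning.

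The resulting program has $y_\beta^\theta$ as its only integer variables. When $|\Theta|$ is fixed, the number of valid $\mathbf{y}$ assignments (one reported type per true type, since $\sum_\beta y_\beta^\theta=1$) is at most $|\Theta|^{|\Theta|}$, which is constant. I would enumerate all of them, and for each fixed $\mathbf{y}$ solve the remaining LP in the continuous variables $\tilde{x}_{j,i}^\theta$, $p_j^\theta$, $\mu_\theta^\xl$ in polynomial time; returning the best LP optimum yields the optimal mixed policy. The step that needs the most care is verifying that the McCormick-style linearization is lossless once $p_j^\theta$ is no longer binary — which is precisely what the marginal identity $\sum_i \tilde{x}_{j,i}^\theta = p_j^\theta$ (combined with $\sum_i x_{j,i}^\theta=1$) secures, so no additional constraints are needed.
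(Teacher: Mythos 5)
Your proposal is correct and follows essentially the same route as the paper's proof sketch: modify Program~\eqref{eq:milp} by giving each induced action $j$ its own strategy $\mathbf{x}_j^\theta$, relax $p_j^\theta$ to $[0,1]$, linearize via $\tilde{x}_{j,i}^\theta$ with the marginal constraint $\sum_i \tilde{x}_{j,i}^\theta = p_j^\theta$, and enumerate the constantly many integer assignments of $\mathbf{y}$, solving an LP for each. You additionally spell out why the linearization remains exact once $p_j^\theta$ is continuous (the ratio-recovery argument), a detail the paper's sketch leaves implicit but uses in its LP for the IC variant.
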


\begin{proof}[Proof Sketch]
We modify Program~\eqref{eq:milp} for a formulation for {\optx}: replace every $\mathbf{x}^\theta$ by $\mathbf{x}_j^\theta$ wherever they are associated with $p_j^\theta$; and relax every $p_j^\theta$ to be in $[0,1]$.
The modified program can be linearized into an MILP in the same way we linearize Program~\eqref{eq:milp}.
By the same argument for Theorem~\ref{thm:optx_fixTheta}, the MILP formulation yields a polynomial-time algorithm for {\optx} when the number of follower types is fixed.
\end{proof}

\begin{theorem}
\label{thm:optBay_mix_ic}
{\optxic}  can be computed in polynomial time.
\end{theorem}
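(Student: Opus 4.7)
The plan is to show that the MILP formulation sketched in the proof of Theorem~\ref{thm:optx_fixTheta}, once IC is imposed, collapses to a polynomial-size linear program. I would proceed in two stages.

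First, I would argue that the IC constraints eliminate the binary $y$-variables entirely. Fixing $y_\theta^\theta = 1$ for every $\theta$ forces $y_\beta^\theta = 0$ for all $\beta\neq\theta$, since $y_\beta^\theta\in\{0,1\}$ and $\sum_\beta y_\beta^\theta = 1$. Consequently, the big-$M$ constraint \eqref{eq:milp_a} simplifies to $\mu_\theta^\xl \le \sum_j p_j^\theta \, u^\xl(\mathbf{x}_j^\theta, j)$ (the $\beta\neq\theta$ instances become vacuous), and \eqref{eq:milp_c} reduces to the standard IC inequality $\sum_j p_j^\theta u_\theta^\xf(\mathbf{x}_j^\theta, j) \ge \sum_j p_j^\beta u_\theta^\xf(\mathbf{x}_j^\beta, j)$ for every $\beta$. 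Because the mixed-policy variant already relaxes $p_j^\theta$ into $[0,1]$, no integer variable remains in the model.

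Second, I would dispose of the remaining nonlinearity --- the products $p_j^\theta \cdot x_{j,i}^\theta$ that appear in the objective, in the IC inequality, and in the best-response constraint \eqref{eq:milp_b}. The big-$M$-style linearization used after Program~\eqref{eq:milp} no longer works, because it relied crucially on $p_j^\theta\in\{0,1\}$. Instead, I would reparametrize the program using ``unnormalized'' variables $q_{j,i}^\theta := p_j^\theta \cdot x_{j,i}^\theta$. Under this change of variables, the entire program becomes linear:
\begin{align*}
\text{maximize}\quad & \sum_\theta \pi_\theta \sum_{j,i} q_{j,i}^\theta \cdot u^\xl(i,j) \\
\text{subject to}\quad & \sum_i q_{j,i}^\theta \bigl[ u_\theta^\xf(i,j) - u_\theta^\xf(i,k) \bigr] \ge 0 \quad \text{for all } \theta\in\Theta,\ j,k\in[n], \\
& \sum_{j,i} q_{j,i}^\theta u_\theta^\xf(i,j) \ge \sum_{j,i} q_{j,i}^\beta u_\theta^\xf(i,j) \quad \text{for all } \theta,\beta\in\Theta, \\
& \sum_{j,i} q_{j,i}^\theta = 1 \ \text{ for all } \theta, \quad q_{j,i}^\theta \ge 0.
\end{align*}
This LP has $O(|\Theta|\,m\,n)$ variables and $O(|\Theta|^2 + |\Theta|\,n^2)$ constraints, so it is solvable in polynomial time.

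The only subtlety, which I expect to be the main thing to verify carefully, is that the $\mathbf{q}$-parametrization is lossless. In the forward direction, every feasible IC mixed policy (restricted, by Proposition~\ref{prp:supportsize}, to mixtures of support size at most $n$) yields a feasible $\mathbf{q}$ of equal objective value via direct substitution. In the reverse direction, from any feasible $\mathbf{q}$ I would decode a policy by setting $p_j^\theta = \sum_i q_{j,i}^\theta$ and $\mathbf{x}_j^\theta = \mathbf{q}_j^\theta / p_j^\theta$ whenever $p_j^\theta > 0$ (choosing any $\mathbf{x}_j^\theta\in\Delta_m$ otherwise, since that outcome is never sampled). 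The non-negativity of $\mathbf{q}$ together with $\sum_{j,i} q_{j,i}^\theta = 1$ ensure that $\mathbf{p}^\theta$ is a distribution over $[n]$ and that each $\mathbf{x}_j^\theta$ on the support lies in $\Delta_m$; dividing the first LP inequality by $p_j^\theta>0$ recovers $u_\theta^\xf(\mathbf{x}_j^\theta,j) \ge u_\theta^\xf(\mathbf{x}_j^\theta,k)$ for all $k$, so $j\in\BR_\theta(\mathbf{x}_j^\theta)$ and every sampled outcome is valid; and the IC inequalities together with the objective translate back verbatim. Polynomial-time LP solvability then yields the claim.
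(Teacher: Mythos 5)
Your proposal is correct and follows essentially the same route as the paper: the paper's LP for {\optxic} uses exactly the variables $\tilde{x}_{ji}^\theta = p_j^\theta \cdot x_{ji}^\theta$ (your $q_{j,i}^\theta$), with the same objective, best-response constraints, IC constraints, and normalization, and recovers the strategies as $\mathbf{x}_j^\theta = \tilde{\mathbf{x}}_j^\theta / p_j^\theta$. Your write-up merely adds detail the paper leaves implicit (the elimination of the $y$-variables and the losslessness of the reparametrization in both directions), so no substantive difference remains.
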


\begin{proof}[Proof Sketch]
The problem can be formulated as a linear program as follows,
where $\tilde{\mathbf{x}}_j^{\theta}$ and $p_j^\theta$ are the variables ($j\in[n], \theta \in \Theta$).
\begin{talign*}
\text{maximize} \quad & \sum_{\theta  } \pi_{\theta}\ \sum_j u^\xl ( \var{\tilde{\mathbf{x}}_j^{\theta}}, j)  \\[0.5mm]
\text{subject to} \quad
&  \sum_j  u_\theta^\xf ( \var{\tilde{\mathbf{x}}_j^{\theta}}, j ) %
- \sum_j  u_\theta^\xf ( \var{\tilde{\mathbf{x}}_j^{\beta}}, j ) \ge 0  & &\text{for } \theta, \beta \in \Theta \\
&  u_{\theta}^\xf ( \var{\tilde{\mathbf{x}}_j^{\theta}}, j) - u_{\theta}^\xf ( \var{\tilde{\mathbf{x}}_j^{\theta}} , k)  \ge 0  & &\text{for }  \theta \in \Theta; \, j,k \in [n] \\
& \sum_{ i } \var{\tilde{x}_{ji}^{\theta}} =  \var{p_j^\theta} & &\text{for }  \theta \in \Theta;\, j\in [n]  \\
& \var{\tilde{\mathbf{x}}_j^\theta} \in \mathbb{R}_{\ge 0}^m,  \ \var{(p_1^\theta, \dots, p_n^\theta)} \in \Delta_n & &\text{for } \theta, \beta \in \Theta ;\, j \in [n]
\end{talign*}
The strategies $\mathbf{x}_j^\theta$ in the leader policy can be extracted from the solution as $\mathbf{x}_j^\theta = \frac{1}{p_j^\theta} \cdot \tilde{\mathbf{x}}_j^{\theta}$.
\end{proof}

\section{Empirical Evaluation}
\label{sec:experiment}

We evaluate our framework with games generated randomly using the well-known covariance game model~\cite{nudelman2004run}.
We generate the players' payoffs uniformly at random from the range $[0,1]$, and then adjust the follower's payoffs by blending them with the negative value of the leader's payoffs;
the degree of blending is controlled by a parameter $\alpha \in [0,1]$, i.e., $u_\theta^\xf \leftarrow (1-\alpha) \cdot u_\theta^\xf - \alpha \cdot u^\xl$.
As such, the game is zero-sum when $\alpha=1$; and, when $\alpha = 0$, payoffs of the leader and the follower are completely uncorrelated.
We also generate the probabilities $\pi_\theta$'s uniformly at random.


\newlength\figurewidth
\setlength\figurewidth{.32\linewidth}
\newlength\figureheight
\setlength\figureheight{.29\linewidth}


\newenvironment{rtplot}[1]
{
	\begin{tikzpicture}
	\begin{axis}[
	width=\figurewidth,
	height=\figureheight,
	xlabel={\empty},
	ylabel={\empty},
	xmin=2, xmax=20,
	ymin=0.75, ymax=1.15,
    xtick={2,8,14,20},
    xticklabels={$2$,$8$,$14$,$20$},
	ytick={0.7,0.8,0.9,1.0,1.1,1.2,1.3,1.4},
	yticklabels={$.7$,$.8$,$.9$,$1.0$,$1.1$,$1.2$, $1.3$, $1.4$},
    tick label style = {font={\small}},
	major tick length=0.6mm,
	every tick/.style={
		black,
	},
	legend pos=north east,
	ymajorgrids=false,
	grid style=dotted,
	style={font=\small},
	legend style={font=\small, /tikz/every even column/.append style={column sep=3mm}, legend columns=7, inner sep=0.5mm, anchor=north east, },
	legend cell align=left,
	x label style={font={\small},at={(axis description cs:0.5,0.0)}},
	y label style={font={\small},at={(axis description cs:0.1,0.48)}},
	cycle list={%
		{blue}, 
		{blue, mark=square*, mark options={solid, thin}, mark size=0.8pt}, 
		{blue, densely dotted, semithick}, 
		{blue, densely dotted, semithick, mark=x, mark options={solid, thin}, mark size=1.5pt}, 
		{black, densely dashdotted, mark=triangle*, mark options={solid, thin}, mark size=1.5pt, semithick}, 
		{red, densely dashed, semithick}, 
		{black, densely dashdotted, semithick}, 
	},
	title style={font={\small}, at={(axis description cs:0.5,-0.7)}},
	#1,
	]
}{
\end{axis}
\end{tikzpicture}
}

\newenvironment{rtplotxalpha}[1]
{
	\begin{rtplot}
    {			
        xmin=0, xmax=1,
		xtick={0, 0.2, 0.4, 0.6, 0.8, 1.0},
		xticklabels={$0$, $.2$, $.4$, $.6$, $.8$, $1$},
        #1,
    }
}
{
\end{rtplot}
}

\begin{figure*}[t]
\newlength\seplength
\setlength\seplength{-4.5mm}
\newcommand{\axessetting}{
    xmin=5,xmax=50,
    xtick={10,20,30,40,50},
    xticklabels={$10$,$20$,$30$,$40$,$50$},
    }
\newcommand{\legendsetting}{
    cycle list={%
		{blue, mark=square*, mark options={solid, thin}, mark size=0.8pt}, 
		{blue, densely dotted, semithick, mark=x, mark options={solid, thin}, mark size=1.5pt}, 
		{red, densely dashed, semithick}, 
		{black, densely dashdotted, semithick}, 
        },
    }
\center
\hspace{-8mm}	
\begin{rtplotxalpha}
    {
        title={(a) $m=5,n=10, |\Theta|=5$},
        xlabel={$\alpha$},
		ylabel={Leader utility ($/$Trf.)},
		legend entries={ {Opt}, {Opt (IC)}, {OptX}, {OptX (IC)}, {BSE}, {Truthful}, {Deceitful} },%
		legend to name=leg:all			
    }
\addplot coordinates{	(0,1.0301580983029)	(0.1,1.03059739321809)	(0.2,1.04076342237389)	(0.3,1.03629487865362)	(0.4,1.02842619876272)	(0.5,1.02486198120083)	(0.6,1.01382669177492)	(0.7,1.00566585192104)	(0.8,1.0008471297292)	(0.9,0.996858929092428)	(1,1)	};	
\addplot coordinates{	(0,0.994957346466641)	(0.1,0.995014306253699)	(0.2,0.994518272077888)	(0.3,0.993033169836099)	(0.4,0.992517918293578)	(0.5,0.995641919746188)	(0.6,0.991597403373848)	(0.7,0.993037289833216)	(0.8,0.993535399690255)	(0.9,0.993074456834213)	(1,1)	};	
\addplot coordinates{	(0,1.03410256440955)	(0.1,1.03571991053646)	(0.2,1.04478617786455)	(0.3,1.04064301590511)	(0.4,1.0367705018615)	(0.5,1.02822352743047)	(0.6,1.01930152907741)	(0.7,1.00893310110929)	(0.8,1.00494362419962)	(0.9,0.99939847817168)	(1,1)	};	
\addplot coordinates{	(0,0.997596839413399)	(0.1,0.997929482444308)	(0.2,0.997183465573958)	(0.3,0.997476717671493)	(0.4,0.996258607333014)	(0.5,0.997867761896468)	(0.6,0.996342135155072)	(0.7,0.996100504834148)	(0.8,0.996791686000955)	(0.9,0.996443018410604)	(1,1)	};	
\addplot coordinates{	(0,0.851773807470239)	(0.1,0.844194373306091)	(0.2,0.818296484392859)	(0.3,0.801654904669501)	(0.4,0.795385716667136)	(0.5,0.80778092187877)	(0.6,0.844215988011361)	(0.7,0.881934285317567)	(0.8,0.932769728668685)	(0.9,0.965538690999915)	(1,1)	};	
\addplot coordinates{	(0,1)	(0.1,1)	(0.2,1)	(0.3,1)	(0.4,1)	(0.5,1)	(0.6,1)	(0.7,1)	(0.8,1)	(0.9,1)	(1,1)	};	
\addplot coordinates{	(0,0.997791059956657)	(0.1,0.996202550205554)	(0.2,0.994871303478072)	(0.3,0.988838260286155)	(0.4,0.98293433236245)	(0.5,0.969880863731002)	(0.6,0.963969935630659)	(0.7,0.975714508089903)	(0.8,0.979769517383461)	(0.9,0.983020782397605)	(1,1)	};	
	\end{rtplotxalpha}
	%
	\hspace{\seplength}
	\begin{rtplotxalpha}
    {
        title={(b) $m=10,n=5, |\Theta|=5$},
        xlabel={$\alpha$},
    }
\addplot coordinates{	(0,1.02401715690314)	(0.1,1.02176793376673)	(0.2,1.0271598913551)	(0.3,1.03024465188528)	(0.4,1.0332233220904)	(0.5,1.02317458324897)	(0.6,1.01382519016114)	(0.7,1.0085531727498)	(0.8,1.00284082838092)	(0.9,0.99890879692751)	(1,1)	};	
\addplot coordinates{	(0,0.995309820083101)	(0.1,0.996175274832256)	(0.2,0.994779966330171)	(0.3,0.996045935241794)	(0.4,0.996303300407356)	(0.5,0.994297511577393)	(0.6,0.99535746975907)	(0.7,0.994860328024228)	(0.8,0.996371432589176)	(0.9,0.996450253530061)	(1,1)	};	
\addplot coordinates{	(0,1.02635973530637)	(0.1,1.02533154775442)	(0.2,1.02908175678032)	(0.3,1.03316783663354)	(0.4,1.03565992586896)	(0.5,1.02659705481383)	(0.6,1.01710184881171)	(0.7,1.0114955668312)	(0.8,1.00503563036878)	(0.9,1.00052974676985)	(1,1)	};	
\addplot coordinates{	(0,0.996980663781774)	(0.1,0.99750217525349)	(0.2,0.99671580352145)	(0.3,0.997635903944781)	(0.4,0.997679385716808)	(0.5,0.99641212231947)	(0.6,0.997484730324478)	(0.7,0.99706405143713)	(0.8,0.997916272528612)	(0.9,0.998251446360951)	(1,1)	};	
\addplot coordinates{	(0,0.901808331156212)	(0.1,0.864110353621326)	(0.2,0.852088302171901)	(0.3,0.855549292049135)	(0.4,0.833984007152753)	(0.5,0.841108285161714)	(0.6,0.886325002717359)	(0.7,0.909003254663037)	(0.8,0.948961444589501)	(0.9,0.976360279617117)	(1,1)	};	
\addplot coordinates{	(0,1)	(0.1,1)	(0.2,1)	(0.3,1)	(0.4,1)	(0.5,1)	(0.6,1)	(0.7,1)	(0.8,1)	(0.9,1)	(1,1)	};	
\addplot coordinates{	(0,0.992196775373215)	(0.1,0.99535792974939)	(0.2,0.991001656460762)	(0.3,0.990384479596007)	(0.4,0.988054996394488)	(0.5,0.98232540156731)	(0.6,0.981017050661452)	(0.7,0.977632196380917)	(0.8,0.984402195354149)	(0.9,0.992540679627349)	(1,1)	};	
	\end{rtplotxalpha}
	\hspace{\seplength}
    \begin{rtplotxalpha}
    {
        title={(c) $m=10,n=10, |\Theta|=5$},
        xlabel={$\alpha$},
    }
\addplot coordinates{	(0,1.01733389100879)	(0.1,1.02020201775343)	(0.2,1.02707077915959)	(0.3,1.02669107334518)	(0.4,1.02755539267927)	(0.5,1.01223001564498)	(0.6,1.00668715392461)	(0.7,1.00426821632581)	(0.8,1.00388230344737)	(0.9,1.00105033238135)	(1,1)	};	
\addplot coordinates{	(0,0.998058821270287)	(0.1,0.997763890868862)	(0.2,0.997162502682546)	(0.3,0.998208467757201)	(0.4,0.997881179583443)	(0.5,0.998390145043544)	(0.6,0.999398872807847)	(0.7,0.998485378756751)	(0.8,0.997812392227028)	(0.9,0.998197370812812)	(1,1)	};	
\addplot coordinates{	(0,1.01928036116741)	(0.1,1.02259750114515)	(0.2,1.03076374832089)	(0.3,1.02811280065665)	(0.4,1.02905970848266)	(0.5,1.01380124899784)	(0.6,1.00768710615052)	(0.7,1.00601780330776)	(0.8,1.00487407473947)	(0.9,1.00167190439459)	(1,1)	};	
\addplot coordinates{	(0,0.999256030609612)	(0.1,0.999062453344134)	(0.2,0.998875366706851)	(0.3,0.999014783505778)	(0.4,0.998755323844403)	(0.5,0.99912204488211)	(0.6,0.99978165298388)	(0.7,0.999263053370237)	(0.8,0.998765032856186)	(0.9,0.999120472671571)	(1,1)	};	
\addplot coordinates{	(0,0.889098061881654)	(0.1,0.85505809657182)	(0.2,0.840689326130563)	(0.3,0.811516950834365)	(0.4,0.797586222927976)	(0.5,0.802965290866778)	(0.6,0.855312036832639)	(0.7,0.892256025909594)	(0.8,0.933910512994635)	(0.9,0.971503493700271)	(1,1)	};	
\addplot coordinates{	(0,1)	(0.1,1)	(0.2,1)	(0.3,1)	(0.4,1)	(0.5,1)	(0.6,1)	(0.7,1)	(0.8,1)	(0.9,1)	(1,1)	};	
\addplot coordinates{	(0,0.998758399445919)	(0.1,0.998792885529646)	(0.2,0.994694183907098)	(0.3,0.998095750211217)	(0.4,0.995857183117483)	(0.5,0.989031964836399)	(0.6,0.992322488785724)	(0.7,0.988705891718021)	(0.8,0.987608564194809)	(0.9,0.993434530189964)	(1,0.9999994756222)	};	
	\end{rtplotxalpha}
	\\[7mm]
\setlength\seplength{-3mm}
	\begin{rtplot}{title={(d) $\alpha = 0.5, |\Theta|=5$ ($n=m$)},
xlabel={$m$},
ylabel={Leader utility ($/$Trf.)},
}
\addplot coordinates{	(2,0.989562836986694)	(4,1.02029566191407)	(6,1.02492436607072)	(8,1.02467066210822)	(10,1.01466055442446)	(12,1.00900113064848)	(14,1.0061557479772)	(16,1.01247207506431)	(18,1.00880175090707)	(20,1.00474906414475)		};	
\addplot coordinates{	(2,0.971511971095546)	(4,0.975701821966559)	(6,0.988030476231553)	(8,0.997480235496177)	(10,0.999456668748487)	(12,0.999519675925193)	(14,0.999362964672572)	(16,0.99980744010923)	(18,0.999915271203153)	(20,0.999994892552064)		};	
\addplot coordinates{	(2,1.00132125924919)	(4,1.04629217888983)	(6,1.03568170534679)	(8,1.03071860437399)	(10,1.01627488416467)	(12,1.01032270178456)	(14,1.00677541069259)	(16,1.01332165452918)	(18,1.0093747855331)	(20,1.0049565145874)		};	
\addplot coordinates{	(2,0.975892737905239)	(4,0.986876742345809)	(6,0.993413472851655)	(8,0.998144599714986)	(10,0.999762556450393)	(12,0.999748944242483)	(14,0.999672625054297)	(16,0.999937542523805)	(18,0.99993691016373)	(20,0.999997446276032)		};	
\addplot coordinates{	(2,0.950114178956337)	(4,0.832395567969704)	(6,0.829455682287545)	(8,0.824772367245017)	(10,0.802986550180219)	(12,0.801447838752075)	(14,0.802579237627016)	(16,0.83108381023067)	(18,0.803171377371985)	(20,0.819990620817469)		};	
\addplot coordinates{	(2,1)	(4,1)	(6,1)	(8,1)	(10,1)	(12,1)	(14,1)	(16,1)	(18,1)	(20,1)		};	
\addplot coordinates{	(2,0.938463080775791)	(4,0.956062750244657)	(6,0.95564014937444)	(8,0.979894344168513)	(10,0.995386247078477)	(12,0.995141857481956)	(14,0.988637979624019)	(16,0.998472856116894)	(18,0.999627061465284)	(20,1.00069205919532)		};	
	\end{rtplot}
\hspace{\seplength}
	\begin{rtplot}{
            title={(e) $\alpha = 0.5, m=5, n=10$},
            xlabel={$|\Theta|$},
            \axessetting,
            \legendsetting,
    }
\addplot coordinates{	(5,0.988704989843147)	(10,0.982604253763828)	(15,0.976182719780049)	(20,0.961580053616199)								};	
\addplot coordinates{	(5,0.995685442138751)	(10,0.99316635519636)	(15,0.990558541221693)	(20,0.987297291341296)	(25,0.982208799374051)	(30,0.985345807443603)	(35,0.979841192725742)	(40,0.981669933045257)	(45,0.979925408984085)	(50,0.980141896049943)		};	
\addplot coordinates{	(5,1)	(10,1)	(15,1)	(20,1)	(25,1)	(30,1)	(35,1)	(40,1)	(45,1)	(50,1)		};	
\addplot coordinates{	(5,0.975676512992605)	(10,0.940744580974127)	(15,0.933321032691538)	(20,0.905571174616643)	(25,0.897628066301095)	(30,0.899350278381568)	(35,0.88414692159626)	(40,0.889487925204866)	(45,0.882653119798301)	(50,0.881308239995978)		};	
	\end{rtplot}
	\hspace{\seplength}
	\begin{rtplotxalpha}
    {
        title={(f) $m=5,n=10, |\Theta|=5, \alpha=0.5$},
        xlabel={$\alpha'$},
    }
\addplot coordinates{	(0,1.03827493392174)	(0.1,1.04197656352738)	(0.2,1.03333344956636)	(0.3,1.02747435657403)	(0.4,1.01940956640115)	(0.5,1.02225629443415)	(0.6,1.01391664024396)	(0.7,1.00557086295655)	(0.8,1.00328577089885)	(0.9,0.989843942718021)	(1,0.998140895664036)	};	
\addplot coordinates{	(0,0.997092159686379)	(0.1,0.995957390283109)	(0.2,0.996025548983773)	(0.3,0.994359245304947)	(0.4,0.995150989373314)	(0.5,0.994866332348553)	(0.6,0.994436784433784)	(0.7,0.992995453542375)	(0.8,0.990342085624901)	(0.9,0.985528652661793)	(1,0.996438550524545)	};	
\addplot coordinates{	(0,1.04963239903957)	(0.1,1.05165909383188)	(0.2,1.04006655529769)	(0.3,1.03459440745923)	(0.4,1.02647873419686)	(0.5,1.0271088089525)	(0.6,1.01831568355481)	(0.7,1.01081966613211)	(0.8,1.00917856531767)	(0.9,0.999049925325457)	(1,1.00054011779513)	};	
\addplot coordinates{	(0,0.998692397635015)	(0.1,0.997769933858621)	(0.2,0.998187171322725)	(0.3,0.997349466791951)	(0.4,0.997980541805928)	(0.5,0.997979953150501)	(0.6,0.997506730521284)	(0.7,0.995802217235321)	(0.8,0.994191409482768)	(0.9,0.994111207051703)	(1,0.998837143971729)	};	
\addplot coordinates{	(0,0.796684110454751)	(0.1,0.813471349282677)	(0.2,0.793810416644501)	(0.3,0.790011740558244)	(0.4,0.803649750472524)	(0.5,0.817957183182604)	(0.6,0.83358321303992)	(0.7,0.84770108701363)	(0.8,0.869159924105334)	(0.9,0.866948611541678)	(1,0.886550505586744)	};	
\addplot coordinates{	(0,1)	(0.1,1)	(0.2,1)	(0.3,1)	(0.4,1)	(0.5,1)	(0.6,1)	(0.7,1)	(0.8,1)	(0.9,1)	(1,1)	};	
\addplot coordinates{	(0,0.959694528569613)	(0.1,0.972654444841698)	(0.2,0.978018987010655)	(0.3,0.948850366336171)	(0.4,0.967500630179741)	(0.5,0.970483795299745)	(0.6,0.955983145028786)	(0.7,0.952194715458887)	(0.8,0.928974637126616)	(0.9,0.889455777180201)	(1,0.94260496815723)	};	
	\end{rtplotxalpha}
	\\[5mm]
\ref{leg:all}\\[3mm]
	\caption{Leader utility obtained with different approaches.
All y-axes represent leader utility as a \emph{ratio to the utility obtained in the truthful situation}. The missing data points are instances that are  not able to be solved within one hour in our platform because of the scalability issue. All results are the average of 50 runs.
}\label{fig:exp}
\end{figure*}

We compare the leader's utility obtained by different policies/approaches proposed in this paper and previous research, including:
(1) Optimal pure policy (labeled \emph{Opt}), with and without IC;
(2) Optimal mixed policy (\emph{OptX}), with and without IC;
(3) Bayesian Stackelberg Equilibrium (\emph{BSE});
(4) Optimal leader strategy when the follower truthfully reports his type (\emph{Truthful}), and (5) the same strategy but when the follower strategically reports his type (\emph{Deceitful}).

Figure~\ref{fig:exp} depicts our results.
The first three figures, (a)--(c), show the variance of leader utility with $\alpha$ under different numbers of player actions.
For ease of comparison, the results are depicted as ratios to the leader utility in the truthful situation.
The figures exhibit quite similar patterns.
All our proposed polices (lines in blue) improve the leader's utility significantly upon BSE. The utilities obtained are also very close to the truthful utility even when IC constraints are imposed. When IC is not required, the optimal policies even perform better in almost all experiments. Surprisingly, even when the leader na\"ively trusts a deceitful follower, the results are very positive; the utility obtained is usually very close to the truthful utility. This is in contrast to our theoretical examples. An explanation is that in randomly generated games, the negative effect of follower deception is likely to be cancelled out by positive effect brought by benign follower types: as we have observed in Section~\ref{sec:model}, follower deception is not always bad for the leader.

The utility differences are the most significant when $\alpha$ is around $0.5$.
Intuitively, this is because when $\alpha$ approaches $1$ the diminishing uncertainty over follower types reduces the effect of follower deception.
But when $\alpha$ is close to $0$, payoffs of the leader and the follower are less correlated and their objectives less conflicted; hence, when the follower leverages deception to increase his utility, there is a lower chance that this will decrease the leader's utility.

We further investigate the effects of varying numbers of actions and of follower types.
From Figure~\ref{fig:exp}~(d) we see that utility differences (except for BSE) quickly diminish when the number of actions increase. Similar to the situation when $\alpha$ is close to $0$, the level of randomness of the payoffs increases as the action space grows, which dissolves the negative effect of follower deception.
In contrast, as shown in Figure~\ref{fig:exp}~(e), the effect of follower deception appear to increase with the number of follower types as leader utility in the deceitful setting drops.
Utility offered by the optimal mixed IC policy, however, decreases much slower. Hence, the optimal mixed IC policy offers a scalable practical solution when there is a large number of follower types. At a high level, these results suggest that uncertainty over follower's payoff information is a major amplifier of the negative effect of follower deception.

To explore situations in which follower deception is likely to be more harmful, we slightly modify the covariance game model, generating the follower type set with multiple blending parameters.
In Figure~\ref{fig:exp}~(f), the results are obtained on a follower type set in which half of the types are generated with $\alpha=0.5$, and another half generated with $\alpha'$ as on the x-axis. Interestingly, the difference between the approaches suddenly becomes more distinguishable, and there is a clear gap between the optimal and the deceptive situations.
An in-depth characterization of when deception tends to be more harmful to the leader and when it is not would be an interesting question for future work.

\begin{table}[t]
\renewcommand{\arraystretch}{1.2}
\begin{center}
\begin{tabular}{llllll}
\noalign{\hrule height 1pt}
& {$10^{-5}$} & {$10^{-4}$} & {$10^{-3}$} & {$10^{-2}$} & {$10^{-1}$}  \\
\noalign{\hrule height 0.5pt}
Opt                 &1.00   &1.00	&1.00	&0.97	&0.63  \\
Opt (IC)            &1.00	&1.00	&1.00	&0.97	& 0.61$^*$ 	\\
OptX                &1.00	&1.00	&1.00	&0.97	&0.66 	\\
OptX (IC)           &1.00	&1.00	&1.00	&0.98	&0.67  \\
OptX (IC) - large	&1.00	&1.00	&1.00	&0.98	&0.79  \\
\noalign{\hrule height 1pt}
\end{tabular}
\end{center}
\caption{Leader utility yielded by robust solutions as a ratio to the utility under the optimistic tie-breaking assumption, with $\epsilon$ varying from $10^{-5}$ to $10^{-1}$. Results of \emph{OptX (IC) - large} are obtained on large instances with $\alpha=0.5$, $|\Theta|=50$, $n=m=20$; all other results are obtained on instances with $\alpha=0.5$, $|\Theta|=5$, $n=5$ and $m=10$. Each entry is the average of 100 runs. ($*$) In 9 out of 100 runs of this setting there is no feasible solution.}\label{tbl:robust}
\end{table}

\paragraph{Robustness to Tie-breaking}
In our model, we assumed that the follower will break ties in favor of the leader.
The assumption is already widely adopted in the literature of Stackelberg games because such favorable tie-breaking behavior can typically be induced by an infinitesimal deviation of the leader's strategy, though in some rare and extreme cases such inducibility may not hold true (see a recent discussion in \cite{guo18inducible}).
Our next set of simulation results experimentally demonstrate that inducibility is \emph{not} an issue in our model for naturally generated instances. 
In the simulations, we enforce a small utility gap of at least $\epsilon$ for the follower between taking the induced action and any other action (the same with the induced reporting strategy). This removes all the ties, and makes the follower \emph{strictly} prefer the action (by $\epsilon$) the leader intends to induce.
The results are depicted in Table~\ref{tbl:robust}, with $\epsilon$ varying from $10^{-5}$ to $10^{-1}$. The experimental results show that our solution is  robust to tie breaking and always maintains optimality for $\epsilon \leq 10^{-3}$ (recall that the follower's payoffs are in $[-1,1]$). Moreover, desired follower actions are almost always inducible only except for a small number of instances when $\epsilon=0.1$.\footnote{Note that we list $\epsilon = 0.1$ only for illustrative purpose as in practice one would never want to use such a big $\epsilon$ to induce tie breaking. There are extended approaches able to generate feasible robust solutions for every $\epsilon \in [0,1]$ (see~\cite{pita2010robust}), but this is not the focus of this paper.}
These results empirically show that the issue of tie-breaking does not affect the validity of our approach in natural instances.

\section{Conclusion}
\label{sec:conclusion}

In this paper, we point out potential occurrences of (imitative) follower deception in Stackelberg games and the risk when they are ignored.
We then propose a framework to design leader policies against such deception and provide a systematic study of the computational aspects of this framework.
Our results shows that handling follower deception is hard in general, even when the searching of optimal policy is expanded into the less complicated space of mixed policies.
Our experiments indicate that the loss due to ignoring follower deception grows with uncertainty over follower types. However, when there is only a small number of types, the average loss appear to be very small compared to results in the theoretical analysis.

There are a number of potential directions for future work. Perhaps the first ones to investigate are several extended settings, including the setting with no prior knowledge about the distribution of follower types, and the setting where follower types fall into a continuous space (e.g., when each payoff parameter falls in an interval) instead of a finite set as in our current model.
For the former, the worst-case analysis might be relevant, and an even more challenge task arises when the leader is also trying to learn the prior distribution of follower types from the interaction.
For the latter, even the follower's problem of how to optimally misreport his type does not appear to have an obvious solution, unlike in our current model (where the follower simply needs to enumerate all types in the finite type set).
Another natural setting is that the follower does not need to keep imitating the fake type, switching back to playing their actual best response after the leader finishes learning. This is yet another generalization of the standard Bayesian Stackelberg game model, and the preliminary thought is that the same policy-based approach is also able to guarantee the leader as least her utility in the Bayesian Stackelberg equilibrium (similar to our Proposition~\ref{prp:icexists}).

Besides focusing on normal-form Stackelberg game models, it would also be interesting to position this work in a specific application area, such as security games where follower deception might be more common, yet more harmful, because of the adversarial nature of the game.

\bibliographystyle{named}
\bibliography{deception}

\end{document}